\newcommand{\indep}{\perp \!\!\! \perp}
\newcommand{\indepnot}{\not\!\perp\!\!\!\perp}
\newcommand{\E}{\mathbb{E}}
\newcommand{\R}{\mathbb{R}}
\newcommand{\N}{\mathbb{N}}
  \newcommand{\ie}{i.\,e.\xspace}
\newcommand{\frameworkname}{MRIV\xspace}
\newcommand{\modelname}{\mbox{MRIV-Net}\xspace}
\theoremstyle{definition}
\newtheorem{assumption}{Assumption}
\newtheorem{definition}{Definition}
\theoremstyle{plain}
\newtheorem{lemma}{Lemma}
\newtheorem{theorem}{Theorem}
\title{Estimating individual treatment effects under unobserved confounding using binary instruments}
\author{Dennis Frauen\textsuperscript{\rm 1, 2} \& Stefan Feuerriegel\textsuperscript{\rm 1, 2} \\
\textsuperscript{\rm 1} Munich Center for Machine Learning\\
\textsuperscript{\rm 2} LMU Munich
}
\begin{document}

\maketitle

\begin{abstract}
Estimating conditional average treatment effects (CATEs) from observational data is relevant in many fields such as personalized medicine. However, in practice, the treatment assignment is usually confounded by unobserved variables and thus introduces bias. A remedy to remove the bias is the use of instrumental variables (IVs). Such settings are widespread in medicine (e.g., trials where the treatment assignment is used as binary IV). In this paper, we propose a novel, multiply robust machine learning framework, called \frameworkname, for estimating CATEs using binary IVs and thus yield an unbiased CATE estimator. Different from previous work for binary IVs, our framework estimates the CATE directly via a pseudo outcome regression. (1)~We provide a theoretical analysis where we show that our framework yields multiple robust convergence rates: our CATE estimator achieves fast convergence even if several nuisance estimators converge slowly. (2)~We further show that our framework asymptotically outperforms state-of-the-art plug-in IV methods for CATE estimation, in the sense that it achieves a faster rate of convergence if the CATE is smoother than the individual outcome surfaces. (3)~We build upon our theoretical results and propose a tailored deep neural network architecture called \modelname for CATE estimation using binary IVs. Across various computational experiments, we demonstrate empirically that our \modelname achieves state-of-the-art performance. To the best of our knowledge, our \frameworkname is the first multiply robust machine learning framework tailored to estimating CATEs in the binary IV setting. 
\end{abstract}

\section{Introduction}


Conditional average treatment effects (CATEs) are relevant across many disciplines such as marketing \citep{Varian.2016} and personalized medicine \citep{Yazdani.2015}. Knowledge about CATEs provides insights into the heterogeneity of treatment effects, and thus helps in making potentially better treatment decisions \citep{Frauen.2022b}. 


Many recent works that use machine learning to estimate causal effects, in particular CATEs, are based on the assumption of unconfoundedness \citep{Alaa.2017, Lim.2018, Melnychuk.2022, Melnychuk.2022b}. In practice, however, this assumption is often violated because it is common that some confounders are not reported in the data. Typical examples are income or the socioeconomic status of patients, which are not stored in medical files. If the confounding is sufficiently strong, standard methods for estimating CATEs suffer from confounding bias \citep{Pearl.2009}, which may lead to inferior treatment decisions. 

To handle unobserved confounders, instrumental variables (IVs) can be leveraged to relax the assumption of unconfoundedness and still compute reliable CATE estimates. IV methods were originally developed in economics \citep{Wright.1928}, but, only recently, there is a growing interest in combining IV methods with machine learning (see Sec.~\ref{sec:related_work}). Importantly, IV methods outperform classical CATE estimators if a sufficient amount of confounding is not observed \citep{Hartford.2017}. We thus aim at estimating CATEs from observational data under unobserved confounding using IVs.


In this paper, we consider the setting where a single binary instrument is available. This setting is widespread in personalized medicine (and other applications such as marketing or public policy) \citep{Bloom.1997}. In fact, the setting is encountered in essentially all observational or randomized studies with observed non-compliance \citep{Imbens.1994}. As an example, consider a randomized controlled trial (RCT), where treatments are randomly assigned to patients and their outcomes are observed. Due to some potentially unobserved confounders (e.g., income, education), some patients refuse to take the treatment initially assigned to them. Here, the treatment assignment serves as a binary IV. Moreover, such RCTs have been widely used by public decision-makers, e.g., to analyze the effect of health insurance on health outcome (see the so-called \emph{Oregon health insurance experiment}) \citep{Finkelstein.2012} or the effect of military service on lifetime earnings \citep{Angrist.1990}.


We propose a novel machine learning framework (called \frameworkname) for estimating CATEs using binary IVs. Our framework takes an initial CATE estimator and nuisance parameter estimators as input to perform a pseudo-outcome regression. Different to existing literature, our framework is \textbf{multiply robust}\footnote{For a detailed introduction to multiple robustness and its importance in treatment effect estimation, we refer to \citep{Wang.2018}, Section 4.5.}, i.e., we show that it is consistent in the union of three different model specifications. This is different from existing methods for CATE estimation using IVs such as \citet{Okui.2012}, \citet{Syrgkanis.2019}, or plug-in estimators \citep{BargagliStoffi.2021, Imbens.1994}.

We provide a theoretical analysis, where we use tools from \citet{Kennedy.2022d} to show that our framework achieves a multiply robust convergence rate, i.e., our \frameworkname converges with a fast rate even if several nuisance parameters converge slowly. We further show that, compared to existing plug-in IV methods, the performance of our framework is asymptotically superior. Finally, we leverage our framework and, on top of it, build a tailored deep neural network called \modelname. 

\textbf{Contributions:} (1)~We propose a novel, multiply robust machine learning framework (called \frameworkname) to learn the CATE using the binary IV setting. To the best of our knowledge, ours is the first that is shown to be multiply robust, i.e., consistent in the union of three model specifications. For comparison, existing works for CATE estimation only show double robustness \citep{Wang.2018, Syrgkanis.2019}. (2)~We prove that \frameworkname achieves a multiply robust convergence rate. This is different to methods for IV settings which do not provide robust convergence rates \citep{Syrgkanis.2019}. We further show that our \frameworkname is asymptotically superior to existing plug-in estimators. (3)~We propose a tailored deep neural network, called \modelname, which builds upon our framework to estimate CATEs . We demonstrate that \modelname achieves state-of-the-art performance.

\section{Problem setup}

\textbf{Data generating process:}
We observe data $\mathcal{D} = (x_i, z_i, a_i, y_i)_{i=1}^n$ consisting of $n \in \N$ observations of the tuple $(X, Z, A, Y)$.
Here, $X \in \mathcal{X}$ are observed confounders, $Z \in \{0, 1\}$ is a binary instrument, $A \in \{0, 1\}$ is a binary treatment, and $Y \in \R$ is an outcome of interest. Furthermore, we assume the existence of unobserved confounders $U \in \mathcal{U}$, which affect both the treatment $A$ and the outcome $Y$.
\begin{wrapfigure}{r}{0.25\textwidth}
\vspace{-0.55cm}
\begin{center}
\includegraphics[width=0.25\textwidth]{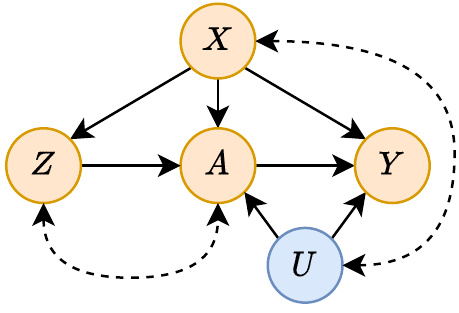}
\end{center}\vspace{-0.4cm}
\caption{Underlying causal graph. The instrument $Z$ has a direct influence on the treatment $A$, but does not have a direct effect on the outcome $Y$. Note that we allow for unobserved confounders for both $Z$--$A$ (dashed line) and $A$--$Y$ (given by $U$).}
\label{fig:causal_graph}
\vspace{-1.2cm}
\end{wrapfigure}
The causal graph is shown in Fig.~\ref{fig:causal_graph}.

\textbf{Applicability}: Our proposed framework is widely applicable in practice, namely to all settings with the above data generating process. This includes both (1)~observational data and (2)~RCTs with non-compliance. For (1), observational data is commonly encountered in, e.g., personalized medicine. Here, modeling treatments as binary variables is consistent with previous literature on causal effect estimation and standard in medical practice \citep{Robins.2000}. For (2), our setting is further encountered in RCTs when the instrument $Z$ is a randomized treatment assignment but individuals do not comply with their treatment assignment. Such RCTs have been extensively used by public decision-makers, e.g., to analyze the effect of health insurance on health outcome \citep{Finkelstein.2012} or the effect of military service on lifetime earnings \citep{Angrist.1990}. 

We build upon the potential outcomes framework \citep{Rubin.1974} for modeling causal effects. Let $Y(a,z)$ denote the potential outcome that would have been observed under $A=a$ and $Z=z$. Following previous literature on IV estimation \citep{Wang.2018}, we impose the following standard IV assumptions on the data generating process.

\begin{assumption}[Standard IV assumptions \citep{Wang.2018, Wooldridge.2013}]
\label{ass:iv}
We assume: (1)~\emph{Exclusion:} $Y(a,z) = Y(a)$ for all $a, z \in \{0,1\}$, i.e., the instrument has no direct effect on the patient outcome; (2)~\emph{Independence:} $Z \indep U \mid X$; (3)~\emph{Relevance:} $Z \indepnot A \mid X$, (iv)~\emph{The model includes all $A$--$Y$ confounder:} $Y(a) \indep (A, Z) \mid (X, U)$ for all $a \in \{0,1\}$.
\end{assumption}

Assumption 1 is standard for IV methods and fulfilled in practical settings where IV methods are applied \citep{Angrist.1990, Angrist.1991, Imbens.1994}. Note that Assumption~\ref{ass:iv} does not prohibit the existence of unobserved $Z$--$A$ confounders. On the contrary, it merely prohibits the existence of unobserved counfounders that affect all $Z$, $A$, and $Y$ simultaneously, as it is standard in IV settings \citep{Wooldridge.2013}. A practical and widespread example where Assumption~\ref{ass:iv} is satisfied are randomized controlled trials~(RCTs) with non-compliance \citep{Imbens.1994}. Here, the treatment assignment $Z$ is randomized, but the actual relationship between treatment $A$ and outcome $Y$ may still be confounded. For instance, in the \emph{Oregon health insurance experiment} \citep{Finkelstein.2012}, people were given access to health insurance ($Z$) by a lottery with aim to study the effect of health insurance ($A$) on health outcome ($Y$) \citep{Finkelstein.2012}. Here, the lottery winners needed to sign up for health insurance and thus both $Z$ and $A$ are observed.
\textbf{Objective:}
In this paper, we are interested in estimating the \emph{conditional average treatment effect} (CATE)
\begin{equation}
 \tau(x) = \E[Y(1) - Y(0) \mid X = x].
\end{equation}
If there is no unobserved confounding ($U = \emptyset$), the CATE is identifiable from observational data \citep{Shalit.2017}. However, in practice, it is often unlikely that all confounders are observable. To account for this, we leverage the instrument $Z$ to identify the CATE. We state the following assumption for identifiability.

\begin{assumption}[Identifiability of the CATE \citep{Wang.2018}]\label{ass:identification}
 At least one of the following two statements holds true: (1)~$\E[A \mid Z=1, X, U] - \E[A \mid Z=0, X, U] = \E[A \mid Z=1, X] - \E[A \mid Z=0, X]$;
 or (2)~$\E[Y(1) - Y(0) \mid X, U] = \E[Y(1) - Y(0) \mid X]$. 
\end{assumption}
\textbf{Example:} Assumption \ref{ass:iv} holds when the function $f(a, X, U) = \E[Y(a) \mid X, U]$ is additive with respect to $a$ and $U$, e.g., $f(a,X, U) = g(a, X) + h(U)$ for measurable functions $h$ and $g$. This implies that no unobserved confounder affects the outcome through a path which is also affected by the treatment. For example, with patient income as unobserved confounder, the treatment should not affect the (future) patient income.

Under Assumptions \ref{ass:iv} and \ref{ass:identification}, the CATE is identifiable \citep{Wang.2018}. It can be written as
\begin{equation}
\label{eq:identification}
  \tau(x) = \frac{\mu_1^Y(x) - \mu_0^Y(x)}{\mu_1^A(x) - \mu_0^A(x)} = \frac{\delta_Y(x)}{\delta_A(x)},
\end{equation}
where $\mu_i^Y(x) = \E[Y \mid Z = i, X = x]$ and $\mu_i^A(x) = \E[A \mid Z = i, X = x]$. Even if Assumption~\ref{ass:identification} does not hold, all our results in this paper still hold for the quantity on the right-hand side of Eq.~\eqref{eq:identification}. In certain cases, this quantity still allows for interpretation: If no unobserved $Z$--$A$ confounders exist, it can be interpreted as conditional version of the \emph{local average treatment effect}~(LATE) \citep{Imbens.1994} under a monotonicity assumption. Furthermore, under a no-current-treatment-value-interaction assumption, it can be interpreted as conditional \emph{treatment effect on the treated}~(ETT) \citep{Wang.2018}.\footnote{The conditional LATE measures the CATE for individuals which are part of the complier subpopulation, i.e., for whom $A(Z=1) > A(Z=0)$. The conditional ETT measures the CATE for treated individuals.} This has an important implication for our results: If Assumption~\ref{ass:identification} does not hold in practice, our estimates still provide conditional LATE or ETT estimates under the respective assumptions because they are based on Eq.~\eqref{eq:identification}. If Assumption~\ref{ass:identification} does hold, all three -- i.e., CATE, conditional LATE, and ETT -- coincide \citep{Wang.2018}.

\section{Related work}
\label{sec:related_work}

\textbf{Machine learning methods for IV:}
Only recently, machine learning has been integrated into IV methods. These are: \citet{Singh.2019} and \citet{Xu.2021} generalize 2SLS by learning complex feature maps using kernel methods and deep learning, respectively. \citet{Hartford.2017} adopts a two-stage neural network architecture that performs the first stage via conditional density estimation. \citet{Bennett.2019} leverages moment conditions for IV estimation. However, the aforementioned methods are not specifically designed for the binary IV setting but, rather, for multiple IVs or treatment scenarios. In particular, they impose stronger assumptions such as additive confounding in order to identify the CATE. Note that additive confounding is a special case of our Assumption~\ref{ass:identification}. Moreover, they do not have robustness properties. In the binary IV setting, current methods proceed by estimating $\mu_i^Y(x)$ and $\mu_i^A(x)$ separately, before plugging them in Eq.~\ref{eq:identification} \citep{Imbens.1994, Angrist.1996, BargagliStoffi.2021}. As a result, these suffer from plug-in bias and do \emph{not} offer robustness properties.

\begin{wraptable}{r}{7cm}
\vspace{-1.2cm}
\caption{Key methods for causal effect estimation with IVs and associated robustness properties}
\label{tab:rel_work}
\resizebox{7cm}{!}{
\begin{tabular}{lcc}
\toprule
\backslashbox{Robustness}{Estimand} & {ATE} & {ITE}  \\
\midrule
Doubly robust & \citet{Okui.2012} & \citet{Syrgkanis.2019}\\
Multiply robust &\citet{Wang.2018}&  \textbf{{MRIV}} (ours) \\
\bottomrule
\end{tabular}
}
\vspace{-0.4cm}
\end{wraptable}

\textbf{Doubly robust IV methods:} Recently, doubly robust methods have been proposed for IV settings: \citet{Kennedy.2019} propose a pseudo regression estimator for the local average treatment effect using continuous instruments, which has been extended to conditional effects by \citet{Semenova.2021}. Furthermore, \citet{Singh.2019b} use a doubly robust approach to estimate average compiler parameters. Finally, \citet{Ogburn.2015} and \citet{Syrgkanis.2019} propose doubly robust CATE estimators in the IV setting which both rely on doubly robust parametrizations of the uncentered efficient influence function \citep{Okui.2012}. However, none of these estimators has been shown to be multiply robust in the sense that they are consistent in the union of more than two model specifications \citep{Wang.2018}.

\textbf{Multiply robust IV methods:} Multiply robust estimators for IV settings have been proposed only for average treatment effects (ATEs) \citep{Wang.2018} and optimal treatment regimes \citep{Cui.2021} but \underline{not} for CATEs . In particular, \citet{Wang.2018} derive a multiply robust parametrization of the efficient influence function for the ATE. However, there exists \underline{no} method that leverages this result for CATE estimation. 
We provide a detailed, technical comparison of existing methods and our framework in Appendix~\ref{app:baseline}.

\textbf{Doubly robust rates for CATE estimation:} \citet{Kennedy.2022d} analyzed the doubly robust learner in the standard (non-IV) setting and derived doubly robust convergence rates. However, Kennedy's result is not applicable in the IV setting, because we use the multiply robust parametrization of the efficient influence function from \citet{Wang.2018}. In our paper, we rely on certain results from Kennedy, but use these to derive of a multiply robust rate. In particular, this required the derivation of the bias term for a larger number of nuisance parameters (see Appendix~\ref{app:proofs}).

\textbf{Research gap:} To the best of our knowledge, there exists no method for CATE estimation under unobserved confounding that has been shown to be \emph{multiply robust}. To fill this gap, we propose \frameworkname: a \emph{multiply robust} machine learning framework tailored to the binary IV setting. For this, we build upon the approach by \citet{Kennedy.2022d} to derive robust convergence rates, yet this approach has not been adapted to IV settings, which is our contribution.

\section{\frameworkname for estimating CATEs using binary instruments}

In the following, we present our \frameworkname framework for estimating CATEs under unobserved confounding (Sec.~\ref{sec:methodology}). We then derive an asymptotic convergence rate for \frameworkname (Sec.~\ref{sec:theory}) and finally use our framework to develop a tailored deep neural network called \modelname (Sec.~\ref{sec:neural_nets}). 

\vspace{-0.2cm}

\subsection{Framework}
\label{sec:methodology}

\textbf{Motivation:} A na{\"i}ve approach to estimate the CATE is to leverage the identification result in Eq.~\eqref{eq:identification}. Assuming that we have estimated the nuisance components $\hat{\mu}_i^Y$ and $\hat{\mu}_i^A$ for $i \in \{0,1\}$, we can simply plug them into Eq.~\eqref{eq:identification} to obtain the so-called (plug-in) Wald estimator $\hat{\tau}_{\mathrm{W}}(x)$ \citep{Wald.1940}.

However, in practice, the true CATE curve $\tau(x)$ is often simpler (e.g., smoother, more sparse) than its complements $\mu_i^Y(x)$ or $\mu_i^A(x)$ \citep{Kunzel.2019}. In this case, $\hat{\tau}_{\mathrm{W}}(x)$ is inefficient because it models all components separately, and, to address this, our proposed framework estimates $\tau$ {directly} using a pseudo outcome regression. 

\textbf{Overview:} We now propose \frameworkname. \frameworkname is a two-stage meta learner that takes any base method for CATE estimation as input. For instance, the base method could be the Wald estimator from Eq.~\eqref{eq:identification}, any other IV method such as 2SLS, or a deep neural network (as we propose in our \modelname later in Sec.~\ref{sec:neural_nets}). In Stage 1, \frameworkname produces nuisance estimators $\hat{\mu}_0^Y(x)$, $\hat{\mu}_0^A(x)$, $\hat{\delta}_A(x)$, and $\hat{\pi}(x)$, where $\hat{\pi}(x)$ is an estimator of the propensity score $\pi(x) = \mathbb{P}(Z = 1 \mid X = x)$. In Stage 2, \frameworkname estimates $\tau(x)$ directly using a pseudo outcome $\hat{Y}_{\mathrm{\mathrm{MR}}}$ as a regression target. 

Given an arbitrary initial CATE estimator $\hat{\tau}_{\mathrm{init}}(x)$ and nuisance estimates $\hat{\mu}_0^Y(x)$, $\hat{\mu}_0^A(x)$, $\hat{\delta}_A(x)$, and $\hat{\pi}(x)$, we define the pseudo outcome

\vspace{-0.4cm}
\begin{equation}
\label{eq:pseudo_outcome}
    \resizebox{.6\hsize}{!}{$\hat{Y}_{\mathrm{\mathrm{MR}}} = \left(\frac{Z -(1-Z)}{\hat{\delta}_A(X)}\right) \left( \frac{Y - \left( \hat{\mu}_0^Y(X) + \hat{\tau}_{\mathrm{init}}(X) \, (A - \hat{\mu}_0^A(X)) \right) }{Z \, \hat{\pi}(X) + (1-Z) (1-\hat{\pi}(X))}\right) + \hat{\tau}_{\mathrm{init}}(X)$}.
\end{equation}
\vspace{-0.4cm}

\begin{wrapfigure}{L}{0.5\textwidth}
\begin{minipage}{0.5\textwidth}
\vspace{-0.5cm}
\begin{algorithm}[H]
\DontPrintSemicolon
\caption{\frameworkname}
\label{alg:mr}
\scriptsize
\SetKwInOut{Input}{Input}
\Input{~data $(X, Z, A, Y)$, initial CATE estimator $\hat{\tau}_{\mathrm{init}}(x)$}
\tcp{Stage 1: Estimate nuisance components}
$\hat{\pi}(x) \gets \hat{\E}[Z \mid X = x] $, \quad $\hat{\mu}_0^Y(x) \gets \hat{\E}[Y \mid X = x, Z = 0]$, \quad  $\hat{\mu}_0^A(x) \gets \hat{\E}[A \mid X = x, Z = 0]$\;
$\hat{\delta}_A(x) \gets \hat{\E}[A \mid X = x, Z = 1] - \hat{\E}[A \mid X = x, Z = 0]$\;
\tcp{Stage 2: Pseudo outcome regression}
$\hat{Y}_{\mathrm{\mathrm{MR}}} \gets \left( \frac{Z -(1-Z)}{\hat{\delta}_A(X)} \right) \left(\frac{Y - A \, \hat{\tau}_{\mathrm{init}}(X) - \hat{\mu}_0^Y(X) + \hat{\mu}_0^A(X) \, \hat{\tau}_{\mathrm{init}}(X)}{Z \, \hat{\pi}(X) + (1-Z) (1-\hat{\pi}(X))} \right) +\hat{\tau}_{\mathrm{init}}(X)$\;
$\hat{\tau}_{\mathrm{MRIV}}(x) \gets \hat{\E}[\hat{Y}_{\mathrm{\mathrm{MR}}} \mid X = x] $
\end{algorithm}
\vspace{-0.5cm}
\end{minipage}
\end{wrapfigure}

The pseudo outcome $\hat{Y}_{\mathrm{\mathrm{MR}}}$ in Eq.~\eqref{eq:pseudo_outcome} is a multiply robust parameterization of the (uncentered) efficient influence function for the average treatment effect $\E_X[\tau(X)]$ (see the derivation in \citep{Wang.2018}). 
Once we have obtained the pseudo outcome $\hat{Y}_{\mathrm{\mathrm{MR}}}$, we regress it on $X$ to obtain the Stage 2 \frameworkname estimator $\hat{\tau}_{\mathrm{\frameworkname}}(x)$ for $\tau(x)$. The pseudocode for \frameworkname is given in Algorithm~\ref{alg:mr}. \frameworkname can be interpreted as a way to remove plug-in bias from $\hat{\tau}_{\mathrm{init}}(x)$ \citep{Curth.2020}. Using the fact that $\hat{Y}_{\mathrm{\mathrm{MR}}}$ is a multiply robust parametrization of the efficient influence function, we derive a multiple robustness property of $\hat{\tau}_{\mathrm{MRIV}}(x)$.

\begin{theorem}[multiple robustness property]\label{thrm:robustness}
Let $\hat{\mu}_0^Y(x)$, $\hat{\mu}_0^A(x)$, $\hat{\delta}_A(x)$, $\hat{\pi}(x)$, and $\hat{\tau}_{\mathrm{init}}(x)$ denote estimators of $\mu_0^Y(x)$, $\mu_0^A(x)$, $\delta_A(x)$, $\pi(x)$, and $\tau(x)$, respectively. Then, for all $x \in \mathcal{X}$, it holds that $\E[\hat{Y}_{\mathrm{\mathrm{MR}}} \mid X = x] = \tau(x)$,if least one of the following conditions is satisfied: (1)~$\hat{\mu}_0^Y = \mu_0^Y$, $\hat{\mu}_0^A = \mu_0^A$, and $\hat{\tau}_{\mathrm{init}} = \tau$; or (2)~$\hat{\pi} = \pi$ and $\hat{\delta}_A = \delta_A$; or (3)~$\hat{\pi} = \pi$ and $\hat{\tau}_{\mathrm{init}} = \tau$.
\vspace{-0.2cm}
\end{theorem}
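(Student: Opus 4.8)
The plan is to evaluate $\E[\hat{Y}_{\mathrm{MR}} \mid X = x]$ directly by iterated expectations, conditioning on $Z$ first. Throughout, I treat the estimated functions $\hat\mu_0^Y,\hat\mu_0^A,\hat\delta_A,\hat\pi,\hat\tau_{\mathrm{init}}$ as fixed and take the expectation over a fresh draw of $(Z,A,Y)$ given $X=x$; I use the identification $\tau(x)=\delta_Y(x)/\delta_A(x)$ from Eq.~\eqref{eq:identification} (valid under Assumptions~\ref{ass:iv} and \ref{ass:identification}, and otherwise taken as the definition of the target estimand), and I assume the usual overlap conditions $\hat\delta_A(x)\neq 0$ and $0<\hat\pi(x)<1$ so that $\hat{Y}_{\mathrm{MR}}$ is well defined. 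Since $Z\in\{0,1\}$, the factor $Z-(1-Z)$ equals $+1$ on $\{Z=1\}$ and $-1$ on $\{Z=0\}$, and the denominator $Z\hat\pi(X)+(1-Z)(1-\hat\pi(X))$ equals $\hat\pi(x)$ on $\{Z=1\}$ and $1-\hat\pi(x)$ on $\{Z=0\}$. Writing the outer expectation as $\pi(x)\,\E[\,\cdot \mid X=x,Z=1] + (1-\pi(x))\,\E[\,\cdot \mid X=x,Z=0]$ and substituting $\E[Y\mid X=x,Z=i]=\mu_i^Y(x)$ and $\E[A\mid X=x,Z=i]=\mu_i^A(x)$ (abbreviating all nuisance functions at $x$), the first step yields the single closed-form identity
\begin{align*}
\E[\hat{Y}_{\mathrm{MR}} \mid X = x] ={}& \frac{\pi}{\hat\delta_A\,\hat\pi}\,\bigl(\mu_1^Y - \hat\mu_0^Y - \hat\tau_{\mathrm{init}}(\mu_1^A - \hat\mu_0^A)\bigr) \\
&{}- \frac{1-\pi}{\hat\delta_A\,(1-\hat\pi)}\,\bigl(\mu_0^Y - \hat\mu_0^Y - \hat\tau_{\mathrm{init}}(\mu_0^A - \hat\mu_0^A)\bigr) + \hat\tau_{\mathrm{init}},
\end{align*}
from which all three cases follow by substitution.

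Second, I would check each model specification against this identity. Under condition (2), $\hat\pi=\pi$ and $\hat\delta_A=\delta_A$ turn both prefactors into $1/\delta_A$; the $\hat\mu_0^Y$ terms cancel between the two brackets, the $\hat\mu_0^A$ terms combine to $-\hat\tau_{\mathrm{init}}\delta_A$, and $\mu_1^Y-\mu_0^Y=\delta_Y$, so the expression collapses to $\delta_Y/\delta_A - \hat\tau_{\mathrm{init}} + \hat\tau_{\mathrm{init}} = \tau$. Under condition (3), $\hat\pi=\pi$ again makes both prefactors $1/\hat\delta_A$; the $\hat\mu_0^Y$ and $\hat\mu_0^A$ terms cancel, leaving $(\delta_Y - \tau\,\delta_A)/\hat\delta_A + \tau = \tau$ because $\delta_Y=\tau\delta_A$. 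Under condition (1), $\hat\mu_0^Y=\mu_0^Y$ and $\hat\mu_0^A=\mu_0^A$ make the entire second bracket vanish and reduce the first bracket to $\delta_Y-\tau\delta_A=0$, so again $\E[\hat{Y}_{\mathrm{MR}}\mid X=x]=\tau(x)$; note that condition (1) does not require $\hat\pi$ or $\hat\delta_A$ to be correct, matching the statement.

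I do not expect a genuine obstacle: once the conditional expectation is written out, the claim is an algebraic identity. The only points that need care are the bookkeeping of which terms cancel in each of the three cases, and being explicit that the two denominators are nonzero (overlap) and that $Z$ enters only through the two values $Z\in\{0,1\}$. It is also worth stating once, at the outset, that we take $\tau(x)=\delta_Y(x)/\delta_A(x)$, so that the cancellation $\delta_Y-\tau\delta_A=0$ used in cases (1) and (3) is immediate and the argument does not implicitly reinvoke Assumption~\ref{ass:identification} beyond identifying the estimand. The more delicate version of this computation — with the nuisances only approximately correct, which produces the second-order bias expansion underlying the convergence-rate result — is what is deferred to Appendix~\ref{app:proofs}.
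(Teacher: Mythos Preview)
Your proposal is correct and follows essentially the same route as the paper: both compute $\E[\hat Y_{\mathrm{MR}}\mid X=x]$ by conditioning on $Z\in\{0,1\}$ and then verify each of the three specifications by direct substitution. The paper first records the conditional expectation as a separate lemma (Lemma~\ref{lem:second_stage}), rearranged so that the $\hat\mu_0^Y$ and $\hat\mu_0^A$ terms are collected into a common factor $(\pi/\hat\pi)-(1-\pi)/(1-\hat\pi)$, whereas you keep the two $Z=1$ and $Z=0$ contributions intact; these are algebraically identical expressions and the case-by-case cancellations you describe match the paper's.
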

The equalities in Theorem~\ref{thrm:robustness} are meant to hold almost surely. Consistency of $\hat{\tau}_{\mathrm{MRIV}}(x)$ is a direct consequence: If either the nuisance estimators in (1), (2), or (3) converge to their oracle estimands, $\hat{\tau}_{\mathrm{MRIV}}(x)$ will converge to the true CATE. As a result, our \frameworkname framework is \emph{multiply robust} in the sense that our estimator, $\hat{\tau}_{\mathrm{MRIV}}(x)$, is consistent in the union of three different model specifications. Importantly, this is different from \emph{doubly robust} estimators which are only consistent in the union of two model specifications \citep{Wang.2018}. Our \frameworkname is directly applicable to RCTs with non-compliance: Here, the treatment assignment is randomized and the propensity score $\pi(x)$ is known. Our \frameworkname framework can be thus adopted by plugging in the known $\pi(x)$ into the pseudo outcome in Eq.~\eqref{eq:pseudo_outcome}. Moreover, $\hat{\tau}_{\mathrm{MRIV}}(x)$ is already consistent if either $\hat{\tau}_{\mathrm{init}}$(x) or $\hat{\delta}_A(x)$ are.

\subsection{Theoretical analysis}
\label{sec:theory}

We derive the asymptotic bound on the convergence rate of \frameworkname under smoothness assumptions. For this, we define $s$-smooth functions as functions contained in the Hölder class $\mathcal{H}(s)$, associated with Stone's minimax rate \citep{Stone.1980} of $n^{-2s/(2s+p)}$, where $p$ is the dimension of $\mathcal{X}$.

\begin{assumption}[Smoothness]\label{ass:smoothness}
We assume that (1)~the nuisance component $\mu_0^Y(\cdot)$ is $\alpha$-smooth, $\mu_0^A(\cdot)$ is $\beta$-smooth, $\pi(\cdot)$ is $\gamma$-smooth, and $\delta_A(\cdot)$ is $\delta$-smooth; (2)~all nuisance components are estimated with their respective minimax rate of $n^{\frac{-2k}{2k+p}}$, where $k \in \{\alpha, \beta, \gamma, \delta\}$; and (3)~the oracle CATE $\tau(\cdot)$ is $\eta$-smooth and the initial CATE estimator $\hat{\tau}_{\mathrm{init}}$ converges with rate $r_{\tau}(n)$. We provide a rigorous definition in Appendix~\ref{app:rates}.
\end{assumption}

Assumption~\ref{ass:smoothness} for smoothness provides us with a way to quantify the difficulty of the underlying nonparametric regression problems. Similar assumptions have been imposed for asymptotic analysis of previous CATE estimators in \citep{Kennedy.2022d, Curth.2021}. They can be replaced with other assumptions such as assumptions on the level of sparsity of the CATE components. We also provide an asymptotic analysis under sparsity assumptions (see Appendix~\ref*{app:sparsity}). 


\begin{assumption}[Boundedness]\label{ass:boundedness}
We assume that there exist constants $C, \rho, \widetilde{\rho}, \epsilon, K>0$ such that for all $x \in \mathcal{X}$ it holds that: (1)~$|\mu_i^Y(x)| \leq C$; (2)~$|\delta_A (x)| = |\mu_1^A(x) - \mu_0^A(x)| \geq \rho$ and $|\hat{\delta}_A (x)| \geq \widetilde{\rho}$; (3)~$\epsilon \leq \hat{\pi}(x) \leq 1 - \epsilon$; and (4)~$|\hat{\tau}_{\mathrm{init}}(x)| \leq K$.
\end{assumption}
Assumptions \ref{ass:boundedness}.1, \ref{ass:boundedness}.3, and \ref{ass:boundedness}.4 are standard and in line with previous works on theoretical analyses of CATE estimators \citep{Curth.2021,Kennedy.2022d}. Assumption \ref{ass:boundedness}.2 ensures that both the oracle CATE and the estimator are bounded. Violations of Assumption \ref{ass:boundedness}.2 may occur when working with ``weak'' instruments, which are IVs that are only weakly correlated with the treatment. Using IV methods with weak instruments should generally be avoided \citep{Li.2022}. However, in many applications such as RCTs with non-compliance, weak instruments are unlikely to occur as patients' compliance decisions are generally correlated with the initial treatment assignments.

We state now our main theoretical result: an upper bound on the oracle risk of the \frameworkname estimator. To derive our bound, we leverage the sample splitting approach from \citep{Kennedy.2022d}. The approach in \citep{Kennedy.2022d} has been initially used to analyze the DR-learner for CATE estimation under unconfoundedness and allows for the derivation of robust convergence rates. It has later been adapted to several other meta learners \citep{Curth.2021}, yet \underline{not} for IV methods.

\begin{theorem}[Oracle upper bound under smoothness]\label{thrm:upperbound}
Let $\mathcal{D}_\ell$ for $\ell \in \{1,2,3\}$ be independent samples of size $n$. Let $\hat{\tau}_{init}(x)$, $\hat{\mu}_0^Y(x)$, and $\hat{\mu}_0^A(x)$ be trained on $\mathcal{D}_1$, and let $\hat{\delta}_A(x)$ and $\hat{\pi}(x)$ be trained on $\mathcal{D}_2$. We denote $\hat{Y}_{\mathrm{\mathrm{MR}}}$ as the pseudo outcome from Eq.~\eqref{eq:pseudo_outcome} and $\hat{\tau}_{\mathrm{MRIV}}(x) = \hat{\E}_n[\hat{Y}_{\mathrm{\mathrm{MR}}} \mid X = x]$ as the pseudo outcome regression on $\mathcal{D}_3$ for some generic estimator $\hat{\E}_n[ \cdot \mid X =x]$ of $\E[ \cdot \mid X =x]$.

We assume that the second-stage estimator $\hat{\E}_n$ yields the minimax rate $n^{-\frac{2\eta}{2\eta + p}}$ and satisfies the stability assumption from \citet{Kennedy.2022d}, Proposition 1 (see Appendix~\ref{app:proofs}). Then, under Assumptions 1-4 the oracle risk is upper bounded by
\begin{equation*}\label{eq:upperbound_mriv}
    \E\left[\left(\hat{\tau}_{\mathrm{MRIV}}(x) - \tau(x)\right)^2\right] \lesssim   n^{\frac{-2\eta}{2\eta+p}} + r_{\tau}(n) \left(n^{\frac{-2\gamma}{2\gamma+p}} + n^{\frac{-2\delta}{2\delta+p}}  \right) + 
    n^{-2\left(\frac{\alpha}{2\alpha+p}+\frac{\gamma}{2\gamma+p}\right)} +
    n^{-2\left(\frac{\beta}{2\beta+p}+\frac{\gamma}{2\gamma+p}\right)} .
\end{equation*}
\end{theorem}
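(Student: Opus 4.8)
The plan is to adapt the sample-splitting argument of \citet{Kennedy.2022d} to the multiply robust parametrization of \citet{Wang.2018}: I would compare $\hat{\tau}_{\mathrm{MRIV}}$ against an infeasible oracle that regresses the \emph{true} pseudo outcome on $X$, bound the oracle's own regression error by the minimax rate, and reduce the remaining discrepancy to the conditional bias of $\hat{Y}_{\mathrm{MR}}$, which I then expand into second-order products of nuisance errors. Concretely, let $Y_{\mathrm{MR}}$ denote the pseudo outcome from Eq.~\eqref{eq:pseudo_outcome} with $\hat{\mu}_0^Y, \hat{\mu}_0^A, \hat{\delta}_A, \hat{\pi}, \hat{\tau}_{\mathrm{init}}$ replaced by their population versions, and let $\tilde{\tau}(x) = \hat{\E}_n[Y_{\mathrm{MR}} \mid X = x]$ be the corresponding oracle estimator trained on $\mathcal{D}_3$. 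By Theorem~\ref{thrm:robustness} (either condition (1) or condition (2) holds for the true nuisances), $\E[Y_{\mathrm{MR}} \mid X = x] = \tau(x)$, so $\tilde{\tau}$ is a bona fide nonparametric regression estimator of the $\eta$-smooth target $\tau$; since $\hat{\E}_n$ attains the minimax rate and Assumption~\ref{ass:boundedness} yields bounded conditional variance for $Y_{\mathrm{MR}}$, we get $\E[(\tilde{\tau}(x) - \tau(x))^2] \lesssim n^{-2\eta/(2\eta+p)}$.

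Writing $\hat{\tau}_{\mathrm{MRIV}}(x) - \tau(x) = \bigl(\hat{\tau}_{\mathrm{MRIV}}(x) - \tilde{\tau}(x)\bigr) + \bigl(\tilde{\tau}(x) - \tau(x)\bigr)$, it remains to control the first term. Conditioning on the nuisance training samples $\mathcal{D}_1 \cup \mathcal{D}_2$, the pseudo outcomes entering the two regressions differ only through fixed functions, and the stability property of $\hat{\E}_n$ (\citet{Kennedy.2022d}, Proposition~1) implies that $\hat{\tau}_{\mathrm{MRIV}}(x) - \tilde{\tau}(x)$ is, up to terms of order smaller than $n^{-2\eta/(2\eta+p)}$, governed by the conditional bias
\begin{equation*}
\hat{b}(x) \;=\; \E\bigl[\hat{Y}_{\mathrm{MR}} - Y_{\mathrm{MR}} \mid X = x, \mathcal{D}_1, \mathcal{D}_2\bigr] \;=\; \E\bigl[\hat{Y}_{\mathrm{MR}} \mid X = x, \mathcal{D}_1, \mathcal{D}_2\bigr] - \tau(x),
\end{equation*}
so that $\E\bigl[(\hat{\tau}_{\mathrm{MRIV}}(x) - \tilde{\tau}(x))^2\bigr] \lesssim \E[\hat{b}(x)^2] + o\!\left(n^{-2\eta/(2\eta+p)}\right)$.

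The core of the proof is the bound
\begin{equation*}
|\hat{b}(x)| \;\lesssim\; |\hat{\tau}_{\mathrm{init}}(x) - \tau(x)|\bigl(|\hat{\pi}(x) - \pi(x)| + |\hat{\delta}_A(x) - \delta_A(x)|\bigr) + |\hat{\pi}(x) - \pi(x)|\bigl(|\hat{\mu}_0^Y(x) - \mu_0^Y(x)| + |\hat{\mu}_0^A(x) - \mu_0^A(x)|\bigr).
\end{equation*}
To obtain it I would evaluate $\E[\hat{Y}_{\mathrm{MR}} \mid X = x, \mathcal{D}_1, \mathcal{D}_2]$ by conditioning on $Z \in \{0,1\}$ and substituting $\E[Y \mid Z=i, X=x] = \mu_i^Y(x)$, $\E[A \mid Z=i, X=x] = \mu_i^A(x)$, and $\mathbb{P}(Z=1 \mid X=x) = \pi(x)$; this produces a closed form in the hatted and true nuisances, from which subtracting $\tau(x) = \delta_Y(x)/\delta_A(x)$ and repeatedly adding and subtracting population quantities collapses all first-order contributions and leaves exactly the four cross products above. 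The boundedness conditions of Assumption~\ref{ass:boundedness} (in particular $|\hat{\delta}_A| \geq \widetilde{\rho}$, $|\delta_A| \geq \rho$, $\epsilon \leq \hat{\pi} \leq 1-\epsilon$, $|\hat{\tau}_{\mathrm{init}}| \leq K$, $|\mu_i^Y| \leq C$) bound the denominators and prefactors by constants. The combinatorics here is dictated by Theorem~\ref{thrm:robustness}: each surviving term must vanish under each of conditions (1), (2), (3), which excludes products such as $(\hat{\mu}_0^Y - \mu_0^Y)(\hat{\delta}_A - \delta_A)$ or $(\hat{\mu}_0^Y - \mu_0^Y)(\hat{\tau}_{\mathrm{init}} - \tau)$ and forces precisely the displayed combination.

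Finally I would square $\hat{b}(x)$, bound it by the sum of squares of the four cross terms, and take expectations. Since $\hat{\tau}_{\mathrm{init}}, \hat{\mu}_0^Y, \hat{\mu}_0^A$ are estimated on $\mathcal{D}_1$ while $\hat{\pi}, \hat{\delta}_A$ are estimated on the independent sample $\mathcal{D}_2$, each product factorizes into a product of mean-squared errors, and Assumption~\ref{ass:smoothness}(2)--(3) supplies $\E[(\hat{\pi}-\pi)^2] \lesssim n^{-2\gamma/(2\gamma+p)}$, $\E[(\hat{\delta}_A-\delta_A)^2] \lesssim n^{-2\delta/(2\delta+p)}$, $\E[(\hat{\mu}_0^Y-\mu_0^Y)^2] \lesssim n^{-2\alpha/(2\alpha+p)}$, $\E[(\hat{\mu}_0^A-\mu_0^A)^2] \lesssim n^{-2\beta/(2\beta+p)}$, and $\E[(\hat{\tau}_{\mathrm{init}}-\tau)^2] \lesssim r_{\tau}(n)$. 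Combining with the oracle term $n^{-2\eta/(2\eta+p)}$ yields exactly the stated bound. I expect the bias computation and its decomposition into the correct cross products to be the main obstacle: it is the step that genuinely differs from \citet{Kennedy.2022d}, because the parametrization of \citet{Wang.2018} involves five nuisance functions rather than the two of the standard DR-learner, and verifying that only the ``right'' second-order products survive (and that every spurious one cancels) requires the careful bookkeeping alluded to in Appendix~\ref{app:proofs}; the stability and minimax steps are then routine.
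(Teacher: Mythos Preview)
Your proposal is correct and follows essentially the same route as the paper: the paper also invokes Proposition~1 of \citet{Kennedy.2022d} to reduce to the oracle risk plus $\E[\hat r(x)^2]$ with $\hat r(x)=\E[\hat Y_{\mathrm{MR}}\mid X=x]-\tau(x)$, derives a closed form for $\E[\hat Y_{\mathrm{MR}}\mid X=x]$ by conditioning on $Z$ (their Lemma~\ref{lem:second_stage}), and then algebraically collapses $\hat r(x)$ into exactly your four second-order cross products before squaring, applying Assumption~\ref{ass:boundedness}, and using the $\mathcal D_1\perp\mathcal D_2$ independence to factor the expectations (their Lemma~\ref{lem:general_bound}). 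Your anticipated ``main obstacle''---the bookkeeping that eliminates all first-order terms and leaves only the products $(\hat\pi-\pi)(\hat\mu_0^Y-\mu_0^Y)$, $(\hat\pi-\pi)(\hat\mu_0^A-\mu_0^A)$, $(\hat\tau_{\mathrm{init}}-\tau)(\hat\delta_A-\delta_A)$, $(\hat\tau_{\mathrm{init}}-\tau)(\hat\pi-\pi)$---is precisely what the paper carries out explicitly in the proof of Lemma~\ref{lem:general_bound}.
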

\vspace{-0.4cm}
\begin{proof}
See Appendix~\ref*{app:proofs}. The proof provides a more general bound which depends on the pointwise mean squared errors of the nuisance parameters (Lemma~\ref{lem:general_bound}).
\end{proof}

\vspace{-0.3cm}

Recall that the first summand of the lower bound in Eq.~\eqref{eq:upperbound_mriv} is the minimax rate for the oracle CATE $\tau(x)$ which cannot be improved upon. Hence, for a fast convergence rate of $\hat{\tau}_{\mathrm{MRIV}}(x)$, it is sufficient if either: (1)~$r_{\tau}(n)$ decreases fast and $\alpha$, $\beta$ are large; (2)~$\gamma$ and $\delta$ are large; or (3)~$r_{\tau}(n)$ decreases fast and $\gamma$ is large. This is in line with the multiply robustness property of \frameworkname (Theorem~\ref{thrm:robustness}) and means that \frameworkname achieves a fast rate even if the initial or several nuisance estimators converge slowly.

\textbf{Improvement over $\hat{\tau}_{\mathrm{init}}(x)$:}
From the bound in Eq.~\eqref{eq:upperbound_mriv}, it follows that $\hat{\tau}_{\mathrm{MRIV}}(x)$ improves on the convergence rate of the initial CATE estimator $\hat{\tau}_{\mathrm{init}}(x)$ if its rate $r_{\tau}(n)$ is lower bounded by
\begin{equation}\label{eq:lowerbound_mriv}
    r_{\tau}(n) \gtrsim n^{\frac{-2\eta}{2\eta+p}} + n^{-2 \, \left(\frac{\alpha}{2\alpha+p}+\frac{\gamma}{2\gamma+p}\right)} +
    n^{-2\, \left(\frac{\beta}{2\beta+p}+\frac{\gamma}{2\gamma+p}\right)}.
\end{equation}
Hence, our \frameworkname estimator is more likely to improve on the initial estimator $\hat{\tau}_{\mathrm{init}}(x)$ if either (1)~$\gamma$ is large or (2)~$\alpha$ and $\beta$ are large. Note that the margin of improvement depends also on the size of $\gamma$ and $\delta$, i.e., on the smoothness of $\pi(x)$ and $\delta_A(x)$. In fact, this is widely fulfilled in practice. For example, the former is fulfilled for RCTs with non-compliance, where $\pi(x)$ is often some known, fixed number $p \in (0,1)$. 

\subsection{\frameworkname vs. Wald estimator} 

We compare $\hat{\tau}_{\mathrm{MRIV}}(x)$ to the Wald estimator $\hat{\tau}_{\mathrm{W}}(x)$. First, we derive an asymptotic upper bound.

\begin{theorem}[Wald oracle upper bound]\label{thrm:rate_wald}
Assume that $\mu_1^Y(x)$, $\mu_0^Y(x)$ are $\alpha$-smooth, $\mu_1^A(x)$, $\mu_0^A(x)$ are $\beta$-smooth, and are estimated with their respective minimax rate. Let $\hat{\delta}_A(x) = \hat{\mu}_1^A(x) - \hat{\mu}_0^A(x)$ satisfy Assumption~\ref{ass:boundedness}. Then, the oracle risk of the Wald estimator $\hat{\tau}_W(x)$ is bounded by
\begin{equation}\label{eq:upperbound_wald}
    \E\left[(\hat{\tau}_{\mathrm{W}}(x) - \tau(x))^2\right] \lesssim  n^{-\frac{2\alpha}{2\alpha+p}} + n^{-\frac{2\beta}{2\beta+p}}.
\end{equation}
\end{theorem}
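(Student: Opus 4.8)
The plan is to reduce the ratio estimator to its numerator and denominator, exploiting that both $\delta_A(x)$ and $\hat{\delta}_A(x)$ are bounded away from zero by Assumption~\ref{ass:boundedness}.2. Write $\delta_Y(x) = \mu_1^Y(x) - \mu_0^Y(x)$ and $\hat{\delta}_Y(x) = \hat{\mu}_1^Y(x) - \hat{\mu}_0^Y(x)$, so that $\hat{\tau}_{\mathrm{W}}(x) = \hat{\delta}_Y(x)/\hat{\delta}_A(x)$ and $\tau(x) = \delta_Y(x)/\delta_A(x)$. Using the identity $\delta_Y(x) = \tau(x)\,\delta_A(x)$, I obtain the plug-in error decomposition
\begin{equation*}
 \hat{\tau}_{\mathrm{W}}(x) - \tau(x) = \frac{\bigl(\hat{\delta}_Y(x) - \delta_Y(x)\bigr) - \tau(x)\bigl(\hat{\delta}_A(x) - \delta_A(x)\bigr)}{\hat{\delta}_A(x)} .
\end{equation*}
Unlike the \frameworkname pseudo-outcome construction, there is no debiasing term here, so no second-order (product-of-errors) analysis is needed — the error is linear in the nuisance errors.

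Next I would bound the pieces. By Assumption~\ref{ass:boundedness}.1 and $|\delta_A(x)| \geq \rho$, the oracle CATE is bounded, $|\tau(x)| = |\delta_Y(x)|/|\delta_A(x)| \leq 2C/\rho$; and by Assumption~\ref{ass:boundedness}.2, $|\hat{\delta}_A(x)| \geq \widetilde{\rho}$. Hence
\begin{equation*}
 \bigl|\hat{\tau}_{\mathrm{W}}(x) - \tau(x)\bigr| \leq \frac{1}{\widetilde{\rho}}\Bigl(\bigl|\hat{\delta}_Y(x) - \delta_Y(x)\bigr| + \tfrac{2C}{\rho}\,\bigl|\hat{\delta}_A(x) - \delta_A(x)\bigr|\Bigr) \lesssim \bigl|\hat{\delta}_Y(x) - \delta_Y(x)\bigr| + \bigl|\hat{\delta}_A(x) - \delta_A(x)\bigr| ,
\end{equation*}
and a further triangle inequality gives $|\hat{\delta}_Y(x) - \delta_Y(x)| \leq |\hat{\mu}_1^Y(x) - \mu_1^Y(x)| + |\hat{\mu}_0^Y(x) - \mu_0^Y(x)|$, and analogously for the $A$-components. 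Squaring, applying $(\sum_{i=1}^m a_i)^2 \leq m\sum_i a_i^2$, and taking expectations then controls the oracle risk by the pointwise mean squared errors of the four nuisance regressions; since each $\hat{\mu}_i^Y$ attains the $\alpha$-smooth minimax rate and each $\hat{\mu}_i^A$ the $\beta$-smooth minimax rate, i.e. $\E[(\hat{\mu}_i^Y(x) - \mu_i^Y(x))^2] \lesssim n^{-2\alpha/(2\alpha+p)}$ and $\E[(\hat{\mu}_i^A(x) - \mu_i^A(x))^2] \lesssim n^{-2\beta/(2\beta+p)}$, the claimed bound $\E[(\hat{\tau}_{\mathrm{W}}(x) - \tau(x))^2] \lesssim n^{-2\alpha/(2\alpha+p)} + n^{-2\beta/(2\beta+p)}$ follows.

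The argument is largely routine, and the one genuinely delicate point is the division: one must ensure that the randomness of $\hat{\delta}_A$ never drives the denominator close to zero, which is precisely what the uniform-in-sample lower bound $|\hat{\delta}_A(x)| \geq \widetilde{\rho}$ in Assumption~\ref{ass:boundedness}.2 provides — without it the expectation of the reciprocal could be unbounded. A secondary, purely bookkeeping, point is that ``oracle risk'' and the minimax-rate assumption should be read as statements about pointwise MSE at the fixed evaluation point $x$ (as in \citet{Kennedy.2022d}); if one instead works with integrated risk, the boundedness of the estimands is enough to pass between the two and no new ideas are required.
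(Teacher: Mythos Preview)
Your proof is correct and follows essentially the same route as the paper: an algebraic decomposition of the ratio error, the boundedness assumptions to control the denominator and the factor $|\tau(x)|$, the elementary inequality $(a+b)^2 \le 2(a^2+b^2)$ applied repeatedly, then expectations and the minimax rates. The only cosmetic difference is that the paper writes the decomposition with common denominator $\delta_A(x)\hat{\delta}_A(x)$ (keeping $\delta_Y$ rather than $\tau$ in the numerator), whereas you divide through by $\delta_A(x)$ first to expose $\tau(x)$ explicitly; the two are algebraically equivalent.
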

\begin{proof}
See Appendix~\ref*{app:proofs}.
\end{proof}
\vspace{-0.3cm}
We now consider the \frameworkname estimator $\hat{\tau}_{\mathrm{MRIV}}(x)$ with $\hat{\tau}_{\mathrm{init}} = \hat{\tau}_{\mathrm{W}}(x)$, i.e., initialized with the Wald estimator (under sample splitting). Plugging the Wald rate from Eq.~\eqref{eq:upperbound_wald} into the Eq.~\eqref{eq:upperbound_mriv} yields

\vspace{-0.6cm}
\begin{equation*}
    \resizebox{\hsize}{!}{$\E\left[\left(\hat{\tau}_{\mathrm{MRIV}}(x) - \tau(x)\right)^2\right] \lesssim n^{\frac{-2\eta}{2\eta+p}} +
    n^{-2\left(\frac{\alpha}{2\alpha+p}+\frac{\delta}{2\delta+p}\right)} +
    n^{-2\left(\frac{\beta}{2\beta+p}+\frac{\delta}{2\delta+p}\right)} +
    n^{-2\left(\frac{\alpha}{2\alpha+p}+\frac{\gamma}{2\gamma+p}\right)} +
    n^{-2\left(\frac{\beta}{2\beta+p}+\frac{\gamma}{2\gamma+p}\right)}$}.
\end{equation*}

\vspace{-0.5cm}

For $\alpha = \beta = \gamma = \delta$, the rates of $\hat{\tau}_{\mathrm{MRIV}}(x)$ and $\hat{\tau}_{\mathrm{W}}(x)$ reduce to $\E\left[\left(\hat{\tau}_{\mathrm{MRIV}}(x) - \tau(x)\right)^2\right] \lesssim n^{\frac{-2\eta}{2\eta+p}} +
    n^{\frac{-4\alpha}{2\alpha+p}}$ and $\E\left[\left(\hat{\tau}_{\mathrm{W}}(x) - \tau(x)\right)^2\right] \lesssim 
    n^{\frac{-2\alpha}{2\alpha+p}}$.
Hence, $\hat{\tau}_{\mathrm{MRIV}}(x)$ outperforms $\hat{\tau}_{\mathrm{W}}(x)$ asymptotically for $\eta > \alpha$, i.e., when the CATE $\tau(x)$ is smoother than its components, which is usually the case in practice \citep{Kunzel.2019}. For $\eta = \alpha$, the rates of both estimators coincide. Hence, we should expect MRIV to improve on the Wald estimator in real-world settings with large sample size.

\vspace{-0.2cm}
\subsection{\modelname}
\label{sec:neural_nets}

Based on our \frameworkname framwork, we develop a tailored deep neural network called \modelname for CATE estimation using IVs. Our \modelname produces both an initial CATE estimator $\hat{\tau}_{\mathrm{init}}(x)$ and nuisance estimators $\hat{\mu}_0^Y(x)$, $\hat{\mu}_0^A(x)$, $\hat{\delta}_A(x)$, and $\hat{\pi}(x)$. 

\begin{wrapfigure}{r}{0.25\textwidth}
 \begin{center}
 \vspace{-1.1cm}
\includegraphics[width=0.25\textwidth]{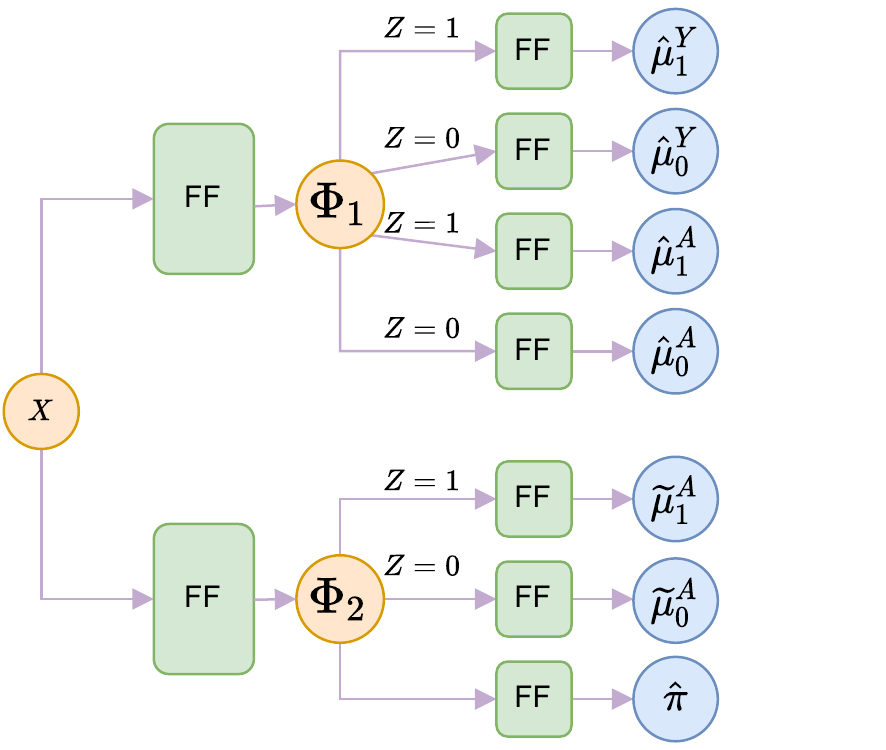}
\end{center}
 \vspace{-0.3cm}
\caption{Architecture of \modelname.}
\label{fig:mrnet}
 \vspace{-0.7cm}
\end{wrapfigure}

For \modelname, we choose deep neural networks for the nuisance components due to their predictive power and their ability to learn complex shared representations for several nuisance components. Sharing representations between nuisance components has been exploited previously for CATE estimation, yet only under unconfoundedness \citep{Shalit.2017, Curth.2021}. Building shared representations is more efficient in finite sample regimes than estimating all nuisance components separately as they usually share some common structure. 

In \modelname, not all nuisance components should share a representation. Recall that, in Theorem~\ref{thrm:upperbound}, we assumed that (1)~$\hat{\tau}_{\mathrm{init}}(x)$, $\hat{\mu}_0^Y(x)$, and $\hat{\mu}_0^A(x)$; and (2)~$\hat{\delta}_A(x)$ and $\hat{\pi}(x)$ are trained on two independent samples in order to derive the upper bound on the oracle risk. Hence, we propose to build two separate representations $\Phi_1$ and $\Phi_2$, so that (i)~$\Phi_1$ is used to learn the parameters (1), 
and (ii)~$\Phi_2$ is used to learn the parameters (2). 
This ensures that the nuisance estimators (1) share minimal information with nuisance estimators (2) even though they are estimated on the same data (cf. \citep{Curth.2021}). 

The architecture of \modelname is shown in Fig.~\ref{fig:mrnet}. \modelname takes the observed covariates $X$ as input to build the two representations $\Phi_1$ and $\Phi_2$. The first representation $\Phi_1$ is used to output estimates $\hat{\mu}_1^Y(x)$, $\hat{\mu}_0^Y(x)$, $\hat{\mu}_1^A(x)$, and $\hat{\mu}_0^A(x)$ of the CATE components. The second representation $\Phi_2$ is used to output estimates $\widetilde{\mu}_1^A(x)$, $\widetilde{\mu}_0^A(x)$, and $\hat{\pi}(x)$. \modelname is trained by minimizing an overall loss
\begin{equation}
    \resizebox{.8\hsize}{!}{$\mathcal{L}(\theta) = \sum_{i=1}^n \left[ \left(\hat{\mu}_{z_i}^Y(x_i) - y_i \right)^2 + \mathrm{BCE}\left(\hat{\mu}_{z_i}^A(x_i), a_i \right) + \mathrm{BCE}\left(\widetilde{\mu}_{z_i}^A(x_i), a_i \right) +  \mathrm{BCE}\left(\hat{\pi}(x_i), z_i \right) \right]$},
\end{equation}
where $\theta$ denotes the neural network parameters and $\mathrm{BCE}$ is the binary cross entropy loss. After training \modelname, we obtain the $\hat{\tau}_{\mathrm{init}}(x) = \frac{\hat{\mu}_1^Y(x) - \hat{\mu}_0^Y(x)}{\hat{\mu}_1^A(x) - \hat{\mu}_0^A(x)}$ and obtain the nuisance estimators $\hat{\mu}_0^Y(x)$, $\hat{\mu}_0^A(x)$, $\hat{\delta}_A(x) = \widetilde{\mu}_1^A(x) - \widetilde{\mu}_0^A(x)$ and $\hat{\pi}(x)$. Then, we perform, we perform the pseudo regression (Stage~2) of \frameworkname to obtain $\hat{\tau}_{\mathrm{\frameworkname}}(x)$.

\textbf{Implementation:} 
Details on the implementation, the network architecture and hyperparameter tuning are in Appendix~\ref*{app:hyper}. We perform both the training of \modelname and the pseudo outcome regression on the full training data. Needless to say, \modelname can be easily adopted for sample splitting or cross-fitting procedures as in \citet{Chernozhukov.2018}, namely, by learning separate networks for each representation $\Phi_1$ and $\Phi_2$. In our experiments, we do not use sample splitting or cross-fitting, as this can affect the performance in finite sample regimes. Of note, our choice is consistent with previous work \citep{Curth.2021}. In Appendix~\ref{app:results_cf} we report results using cross-fitting.

\section{Computational experiments}
\label{sec:experiments}

\subsection{Simulated data}\label{sec:exp_sim}

In causal inference literature, it is common practice to use simulated data for performance evaluations \citep{Bica.2020, Curth.2021, Hartford.2017}. Simulated data offers the crucial benefit that it provides ground-truth information on the counterfactual outcomes and thus allows for direct benchmarking against the oracle CATE.

\textbf{Data generation:}
We generate simulated data by sampling the oracle CATE $\tau(x)$ and the nuisance components $\mu_i^Y(x)$, $\mu_i^A(x)$, and $\pi(x)$ from Gaussian process priors. Using Gaussian processes has the following advantages: (1)~It allows for a fair method comparison, as there is no need to explicitly specify the nuisance components, which could lead to unwanted inductive biases favoring a specific method; (2)~the sampled nuisance components are non-linear and thus resemble real-world scenarios where machine learning methods would be applied; and, (3)~by sampling from the prior induced by the Mat{\'e}rn kernel \citep{Rasmussen.2008}, we can control the smoothness of the nuisance components, which allows us to confirm our theoretical results from Sec.~\ref{sec:theory}.
For a detailed description of our data generating process, we refer to Appendix~\ref*{app:sim}.

\textbf{Baselines:}
We compare our \modelname with state-of-the-art IV baselines. Details regarding baselines and nuisance parameter estimation are in Appendix~\ref*{app:baseline}. Note that many of the baselines do not directly aim at CATE estimation but rather at counterfactual outcome prediction. We nevertheless use these methods as baselines and, for this, obtain the CATE by taking the difference between the predictions of the factual and counterfactual outcomes.

\begin{wraptable}{r}{8cm}
\vspace{-0.8cm}
\scriptsize
\caption{Performance comparison: our \modelname vs. existing baselines.}
\label{tab:base}
\resizebox{8cm}{!}{
\begin{tabular}{lccc}
\toprule
{Method} & {$n = 3000$} & {$n = 5000$} & {$n = 8000$} \\
\midrule
\textsc{(1) Standard ITE} & &  & \\
\quad TARNet \citep{Shalit.2017} &$0.76 \pm 0.14$& $0.70 \pm 0.12$ & $0.69 \pm 0.17$\\
\quad TARNet + DR \citep{Shalit.2017, Kennedy.2022d} &$0.78 \pm 0.10$& $0.66 \pm 0.09$ & $0.70 \pm 0.10$\\
\midrule
\textsc{(2) General IV} & &  & \\
\quad 2SLS \citep{Wooldridge.2013}&$1.22 \pm 0.23$ &$0.79 \pm 0.37$ & $1.12 \pm 0.29$ \\
\quad KIV \citep{Singh.2019}&$1.54 \pm 0.53$ & $1.18 \pm 1.14$ & $3.80 \pm 4.71$\\
\quad DFIV \citep{Xu.2021}&$0.43 \pm 0.11$ & $0.40 \pm 0.21$ & $0.46 \pm 0.54$ \\
\quad DeepIV \citep{Hartford.2017}&$0.96 \pm 0.30$ & $0.28 \pm 0.09$ & $0.23 \pm 0.04$\\
\quad DeepGMM \citep{Bennett.2019}&$0.95 \pm 0.38$ &$0.37 \pm 0.09$ & $0.42 \pm 0.14$ \\
\quad DMLIV \citep{Syrgkanis.2019}&$1.92 \pm 0.71$ & $0.92 \pm 0.41$ & $1.14 \pm 0.24$\\
\quad DMLIV + DRIV \citep{Syrgkanis.2019}&$0.41 \pm 0.12$ & $0.22 \pm 0.04$ & $0.21 \pm 0.06$\\
\midrule
\textsc{(3) Wald estimator \citep{Wald.1940}} & & & \\
\quad Linear &$1.06 \pm 0.63$ & $0.62 \pm 0.22$ & $0.81 \pm 0.34$\\
\quad BART &$0.95 \pm 0.30$ &$0.63 \pm 0.33$ &$0.88 \pm 0.28$\\ \bottomrule \noalign{\smallskip}
\modelname (ours) &$\boldsymbol{0.26 \pm 0.11}$ & $\boldsymbol{0.15 \pm 0.03}$ & $\boldsymbol{0.13 \pm 0.03}$\\

\bottomrule
\multicolumn{4}{l}{Reported: RMSE for base methods (mean $\pm$ standard deviation). Lower $=$ better (best in bold)}
\end{tabular}
}
\vspace{-0.5cm}
\end{wraptable}

\textbf{Performance evaluation:} For all experiments, we use a 80/20 split as training/test set. We calcalute the root mean squared errors (RMSE) between the CATE estimates and the oracle CATE on the test set. We report the mean RMSE and the standard deviation over five data sets generated from random seeds.

\textbf{Results:}
Table~\ref{tab:base} shows the results for all baselines. Here, the DR-learner does not improve the performance of TARNet, which is reasonable as both the DR-learner and TARNet assume unconfoundedness and are thus biased in our setting. Our \modelname outperforms all baselines. Our \modelname also achieves a smaller standard deviation. For additional results, we refer to Appendix~\ref*{app:results}.

We further compare the performance of two different meta-learner frameworks -- DRIV \citep{Syrgkanis.2019} and our \frameworkname -- across different base methods. The results are in Table~\ref{tab:frameworks}.
 \begin{table}[h]
 \vspace{-0.6cm}
\caption{Base model with different meta-learners (i.e., none, DRIV, and our \frameworkname).}
\label{tab:frameworks}
\centering
\resizebox{\columnwidth}{!}{%
\begin{tabular}{lccccccccc}
\noalign{\smallskip} \toprule \noalign{\smallskip}
& \multicolumn{3}{c}{$n = 3000$} & \multicolumn{3}{c}{$n = 5000$} & \multicolumn{3}{c}{$n = 8000$} \\
\cmidrule(lr){2-4} \cmidrule(lr){5-7} \cmidrule(lr){8-10}
\backslashbox{Base methods}{Meta-learners} & None & DRIV & \frameworkname (ours) & None & DRIV & \frameworkname (ours) & None & DRIV & \frameworkname (ours)\\
\midrule
\textsc{(1) Standard ITE} & & & \\
\quad TARNet \citep{Shalit.2017}& $0.76 \pm 0.14$&$\boldsymbol{0.31 \pm 0.05}$& $0.34 \pm 0.13$ &$0.70 \pm 0.12$ & $\boldsymbol{0.17 \pm 0.06}$ & $\boldsymbol{0.17 \pm 0.05}$ &$0.69 \pm 0.17$ & $0.21 \pm 0.04$ & $\boldsymbol{0.16 \pm 0.04}$\\
\midrule
\textsc{(2) General IV} & & & \\
\quad 2SLS \citep{Wooldridge.2013}& $1.22 \pm 0.23$& $0.40 \pm 0.11$ &$\boldsymbol{0.31 \pm 0.08}$ &$0.79 \pm 0.37$ & $\boldsymbol{0.17 \pm 0.09}$ & $0.19 \pm 0.05$ & $1.12 \pm 0.29$ & $0.21 \pm 0.05$& $\boldsymbol{0.16 \pm 0.02}$\\
\quad KIV \citep{Singh.2019}&$1.54 \pm 0.53$ & $0.40 \pm 0.10$& $\boldsymbol{0.39 \pm 0.11}$ & $1.18 \pm 1.14$& $0.20 \pm 0.08$ & $\boldsymbol{0.17 \pm 0.06}$ & $3.80 \pm 4.71$& $0.31 \pm 0.18$& $\boldsymbol{0.28 \pm 0.19}$\\
\quad DFIV \citep{Xu.2021}&$0.43 \pm 0.11$ &$\boldsymbol{0.26 \pm 0.05}$ & $0.27 \pm 0.07$ &$0.40 \pm 0.21$ & $0.18 \pm 0.09$ & $\boldsymbol{0.16 \pm 0.04}$ &$0.46 \pm 0.54$ & $0.21 \pm 0.06$ & $\boldsymbol{0.18 \pm 0.05}$ \\
\quad DeepIV \citep{Hartford.2017}&$0.96 \pm 0.30$ & $0.27 \pm 0.03$ & $\boldsymbol{0.26 \pm 0.05}$ & $0.28 \pm 0.09$& $\boldsymbol{0.18 \pm 0.08}$ & $\boldsymbol{0.18 \pm 0.05}$ & $0.23 \pm 0.04$ & $0.21 \pm 0.03$ & $\boldsymbol{0.16 \pm 0.03}$\\
\quad DeepGMM \citep{Bennett.2019}&$0.95 \pm 0.38$ & $0.40 \pm 0.15$& $\boldsymbol{0.36 \pm 0.13}$ &$0.37 \pm 0.09$ & $0.24 \pm 0.12$ & $\boldsymbol{0.16 \pm 0.05}$ &$0.42 \pm 0.14$ & $0.21 \pm 0.03$ & $\boldsymbol{0.17 \pm 0.03}$ \\
\quad DMLIV \citep{Syrgkanis.2019}&$1.92 \pm 0.71$ &$0.41 \pm 0.12$ & $\boldsymbol{0.37 \pm 0.11}$ &$0.92 \pm 0.41$ & $0.22 \pm 0.05$ & $\boldsymbol{0.16 \pm 0.05}$ &$1.14 \pm 0.24$ & $0.21 \pm 0.06$& $\boldsymbol{0.18 \pm 0.05}$\\
\midrule
\textsc{(3) Wald estimator \citep{Wald.1940}}& & & & &  & \\
\quad Linear &$1.06 \pm 0.63$ & $0.42 \pm 0.15$& $\boldsymbol{0.38 \pm 0.14}$ &$0.62 \pm 0.22$ &  $\boldsymbol{0.19 \pm 0.09}$ & $0.25 \pm 0.09$ &$0.81 \pm 0.34$ & $0.19 \pm 0.09$ & $\boldsymbol{0.18 \pm 0.04}$\\
\quad BART &$0.95 \pm 0.30$ &$0.48 \pm 0.14$ & $\boldsymbol{0.46 \pm 0.12}$ &$0.63 \pm 0.33$ & $0.26 \pm 0.13$ & $\boldsymbol{0.20 \pm 0.07}$ &$0.88 \pm 0.28$ & $0.31 \pm 0.08$ & $\boldsymbol{0.29 \pm 0.04}$\\
\midrule
\modelname {\textbackslash}w network only (ours) &$0.39 \pm 0.13$ &$0.35 \pm 0.12$ & $\boldsymbol{0.26 \pm 0.11}$ & $0.31 \pm 0.04$& $0.19 \pm 0.13$ & $\boldsymbol{0.15 \pm 0.03}$ & $0.26 \pm 0.06$& $0.18 \pm 0.08$&  $\boldsymbol{0.13 \pm 0.03}$\\

\bottomrule
\multicolumn{10}{l}{Reported: RMSE (mean $\pm$ standard deviation). Lower $=$ better (best improvement over none meta-learner in bold)}
\end{tabular}%
}
\vspace{-0.5cm}
\end{table}
The nuisance parameters are estimated using feed forward neural networks (DRIV) or TARNets with either binary or continuous outputs (MRIV). Our \frameworkname improves over the variant without any meta-learner framework across all base methods (both in terms of RMSE and standard deviation). Furthermore, \frameworkname is clearly superior over DRIV. This demonstrates the effectiveness of our \frameworkname across different base methods (note: \frameworkname with an arbitrary base model is typically superior to DRIV with our custom network from above). 
\begin{wraptable}{r}{8cm}
\vspace{-0.8cm}
\caption{Ablation study.}
\label{tab:ablation}
\centering
\scriptsize
\resizebox{8cm}{!}{
\begin{tabular}{lccc}
\noalign{\smallskip} \toprule \noalign{\smallskip}
{Method} & {$n = 3000$} & {$n = 5000$} & {$n = 8000$} \\
\midrule
\modelname {\textbackslash}w network only &$0.39 \pm 0.13$ & $0.31 \pm 0.04$ &
$0.26 \pm 0.06$\\
\modelname {\textbackslash}w single repr. &$0.28 \pm 0.12$ & $0.21 \pm 0.04$ &
$0.32 \pm 0.10$\\
\modelname (ours) &$\boldsymbol{0.26 \pm 0.11}$ & $\boldsymbol{0.15 \pm 0.03}$ &
$\boldsymbol{0.13 \pm 0.03}$\\
\bottomrule
\multicolumn{4}{l}{Reported: RMSE (mean $\pm$ standard deviation). Lower $=$ better (best in bold)}
\end{tabular}}
\vspace{-0.8cm}
\end{wraptable}
\modelname is overall best. We also performed additional experiments where we used semi-synthetic data and cross-fitting approaches for both meta-learners (see Appendix~\ref{app:results} and \ref{app:results_cf}).

\textbf{Ablation study:} Table~\ref{tab:ablation} compares different variants of our \modelname.
These are: (1)~\frameworkname but network only; (2)~\modelname with a single representation for all nuisance estimators; and (3)~our \modelname from above. We observe that \modelname is best. This justifies our proposed network architecture for \modelname. Hence, combing the result from above, our performance gain must be attributed to both our framework \underline{and} the architecture of our deep neural network. 

\begin{wrapfigure}{r}{0.68\textwidth}
\vspace{-0.5cm}
\centering
\includegraphics[width=0.7\textwidth]{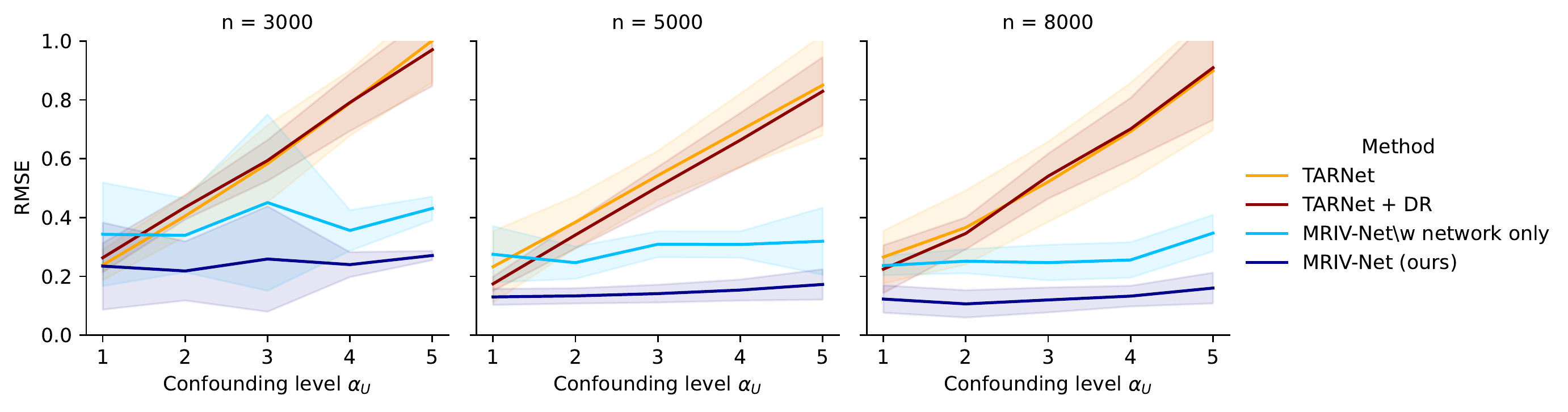}
\vspace{-0.8cm}
\caption{Results over different levels of confounding $\alpha_U$. Shaded area shows standard deviation.}
\label{fig:plt_confounding}
\vspace{-0.4cm}
\end{wrapfigure}

\textbf{Robustness checks for unobserved confounding and smoothness:} Here, we demonstrate the importance of handling unobserved confounding (as we do in our \frameworkname framework). For this, Fig.~\ref{fig:plt_confounding} plots the results for our \modelname vs. standard CATE without customization for confounding (i.e., TARNet with and without the DR-learner) over over different levels of unobserved confounding. The RMSE of both TARNet variants increase almost linearly with increasing confounding. In contrast, the RMSE of our \modelname only marginally. Even for low confounding regimes, our \modelname performs competitively. 
\begin{wrapfigure}{r}{0.35\textwidth}
\vspace{-0.8cm}
 \begin{center}
\includegraphics[width=0.35\textwidth]{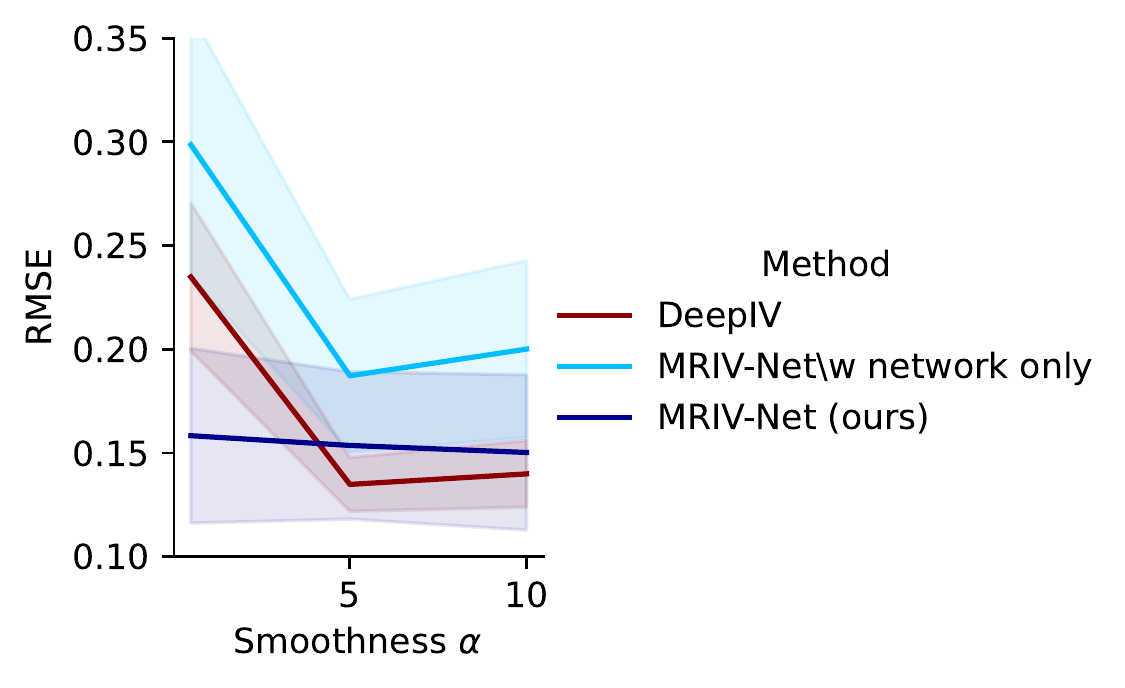}
\end{center}
\vspace{-0.3cm}
\caption{Results over different levels of smoothness $\alpha$ of $\mu_i^Y(\cdot)$, sample size $n=8000$. Larger $\alpha$ = smoother. Shaded areas show standard deviation.}
\vspace{-1cm}
\label{fig:plt_smoothness}
\end{wrapfigure}

Fig.~\ref{fig:plt_smoothness} varies the smoothness level. This is given by $\alpha$ of $\mu_i^Y(\cdot)$ (controlled by the Matérn kernel prior).
Here, the performance decreases for the baselines, i.e., DeepIV and our network without \frameworkname framework. In contrast, the peformance of our \modelname remains robust and outperforms the baselines. This confirms our theoretical results from above. It thus indicates that our \frameworkname framework works best when the oracle CATE $\tau(x)$ is smoother than the nuisance parameters $\mu_i^Y(x)$.

\subsection{Case study with real-world data}
\label{sec:exp_real}
\vspace{-0.2cm}
\textbf{Setting:} We demonstrate effectiveness of our framework using a case study with real-world, medical data. Here, we use medical data from the so-called \emph{Oregon health insurance experiment} (OHIE) \citep{Finkelstein.2012}. It provides data for an RCT with non-compliance: In 2008, $\sim$30,000 low-income, uninsured adults in Oregon were offered participation in a health insurance program by a lottery. Individuals whose names were drawn could decide to sign up for health insurance. After a period of 12 months, in-person interviews took place to evaluate the health condition of the respective participant. 

\begin{wrapfigure}{l}{0.5\textwidth}
\vspace{-0.6cm}
 \begin{center}
\includegraphics[width=0.5\textwidth]{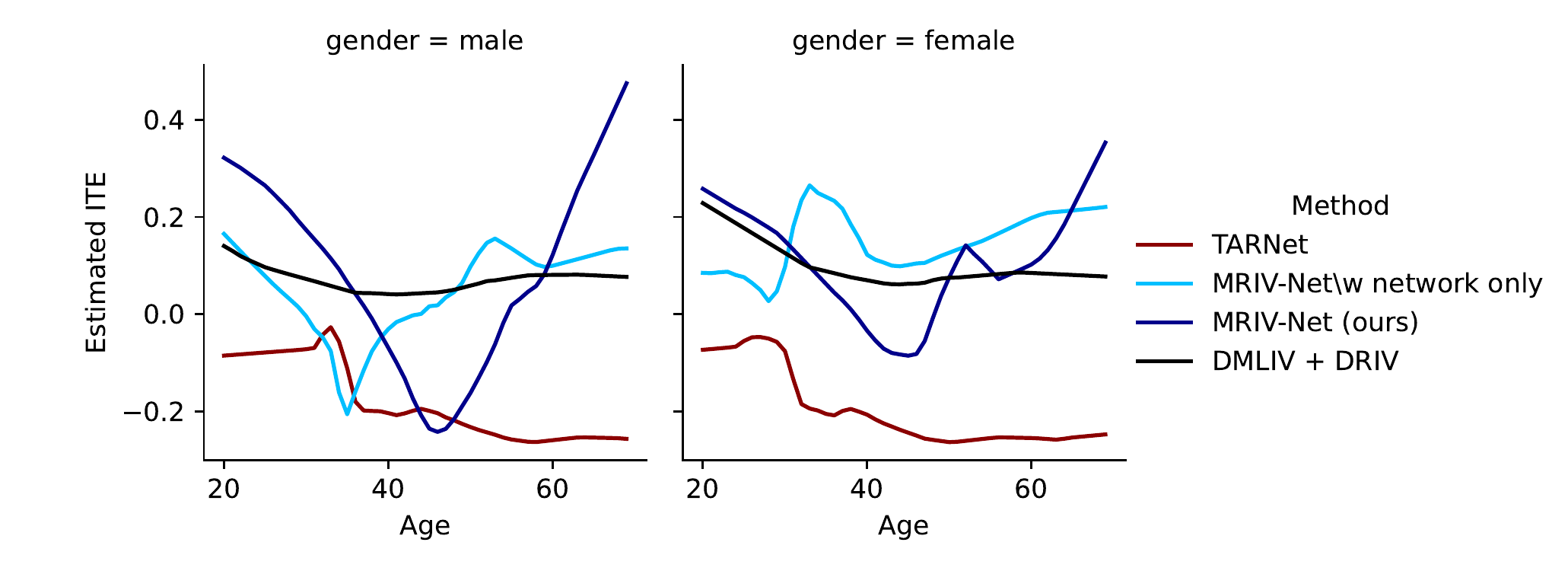}
\end{center}
\vspace{-0.5cm}
\caption{Results on real-world medical data.}
\vspace{-0.5cm}
\label{fig:real}
\end{wrapfigure}
\vspace{-0.3cm}
In our analysis, the lottery assignment is the instrument $Z$, the decision to sign up for health insurance is treatment $A$, and an overall health score is the outcome $Y$. We also include five covariates $X$, including age and gender. For details, we refer to Appendix~\ref*{app:real}. We first estimate the CATE function and then report the treatment effect heterogeneity w.r.t. age and gender, while fixing the other covariates. The results for \modelname, our neural network architecture without the \frameworkname framework, and TARNet are in Fig.~\ref{fig:real}.

\textbf{Results:} Our \modelname estimates larger causal effects for an older age. In contrast, TARNet does not estimate positive CATEs even for an older age. 
Even though we cannot evaluate the estimation quality on real-world data, our estimates seem reasonable in light of the medical literature: the benefit of health insurance should increase with older age. This showcases that TARNet may suffer from bias induced by unobserved confounders. We also report the results for DRIV with DMLIV as base method, and observe that in contrast to \modelname, the corresponding CATE does not vary much between ages. Interestingly, both our \modelname estimate a somewhat smaller CATE for middle ages (around 30--50 yrs).
In sum, the findings from our case study are of direct relevance for decision-makers in public health \citep{Imbens.1994}, and highlight the practical value of our framework. We performed further experiments on real-world data which are reported in Appendix~\ref{app:real_furtherexp}.


\clearpage

\paragraph{Reproducibility:} The codes for reproducing the experimental results can be found at \url{https://github.com/DennisFrauen/MRIV-Net}.

\bibliography{bibliography.bib}
\bibliographystyle{iclr2023_conference}

\clearpage

\appendix

\section{Extended related work}
\label{app:rel_work}
\textbf{CATE methods without unconfoundedness:} Various machine learning methods for estimating CATEs \emph{without} unobserved confounding have been proposed in recent literature \citep{Alaa.2017,Curth.2021,Kunzel.2019,Lim.2018,Shalit.2017,Wager.2018,Yoon.2018, Zhang.2020}. To remove plug-in bias, the DR-learner performs a second stage regression on the uncentered influence function of the average treatment effect \citep{Kennedy.2022d}. However, under unobserved confounding, all of these methods are biased (see Appendix~\ref{app:baseline}). As a result, this hampers their performance in our setting. 

\textbf{Non-IV methods for unobserved confounding:} 
There is a rich literature for causal effect estimation under unobserved confounding, which does not assume the existence of instrumental variables. Methods include deconfounding methods \citep{Wang.2019, Bica.2020b}, proxy learning methods \citep{Cui.2020, Xu.2021b}, and causal sensitivity analysis \citep{Kallus.2019, Jesson.2021}..

\textbf{Classical IV methods:}
IV methods address the problem of unobserved confounding by exploiting the variance in treatment and outcome induced by the instruments. Traditionally, two-stage least squares (2SLS) has been used for estimating causal effects \citep{Wright.1928, Angrist.1991}. 2SLS was originally developed in economics, and follows a two-stage procedure: it performs a first stage regression of treatment $A$ on the instrument $Z$, and then uses the fitted values for a second stage regression to predict the outcome $Y$. Several nonparametric methods have been developed in econometric to generalize 2SLS in order to account for non-linearities within the data \citep{Newey.2003, Wang.2021}, yet these are limited to low-dimensional settings.

\clearpage

\section{Proofs}
\label{app:proofs}

We start by deriving an auxiliary Lemma. That is, we derive an explicit expression for the Stage~2 oracle pseudo outcome regression $\E[\hat{Y}_{\mathrm{MR}} \mid X=x]$ of \frameworkname.
\begin{lemma}\label{lem:second_stage}
\begin{equation}
\begin{split}
    & \E[\hat{Y}_{\mathrm{MR}} \mid X=x] \\
    = & 
   \frac{\pi(x)}{\hat{\delta}_A(x) \hat{\pi}(x)} \left(\mu_1^Y(x) - \mu_1^A(x) \, \hat{\tau}_{\mathrm{init}}(x)\right)
   + \frac{(1-\pi(x))}{\hat{\delta}_A(x)(1 - \hat{\pi}(x))} \left(\mu_0^A(x) \, \hat{\tau}_{\mathrm{init}}(x) -\mu_0^Y(x) \right)\\ & \quad + 
   \frac{\hat{\mu}_0^A(x) \, \hat{\tau}_{\mathrm{init}}(x) - \hat{\mu}_0^Y (x)}{\hat{\delta}_A(x)} \left(\frac{\pi(x)}{\hat{\pi}(x)} - \frac{1- \pi(x)}{1-\hat{\pi}(x)} \right) + \hat{\tau}_{\mathrm{init}}(x)   
\end{split}
\end{equation}
\end{lemma}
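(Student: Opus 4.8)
The plan is to compute $\E[\hat{Y}_{\mathrm{MR}}\mid X=x]$ by the law of total expectation, conditioning in addition on the binary instrument $Z$. Throughout, the functions $\hat{\mu}_0^Y$, $\hat{\mu}_0^A$, $\hat{\delta}_A$, $\hat{\pi}$, $\hat{\tau}_{\mathrm{init}}$ are treated as fixed measurable functions of $x$ (in the setting of Theorem~\ref{thrm:upperbound} this is legitimate because they are fitted on a sample independent of the evaluation point), so the only randomness in the conditional expectation comes from $(Z,A,Y)$ given $X=x$. Assumption~\ref{ass:boundedness} guarantees $\hat{\delta}_A(x)\neq 0$ and $\hat{\pi}(x)\in(0,1)$, so every ratio below is well-defined.

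First I would split on the value of $Z$ in Eq.~\eqref{eq:pseudo_outcome}. On $\{Z=1\}$ we have $Z-(1-Z)=1$ and $Z\hat{\pi}(X)+(1-Z)(1-\hat{\pi}(X))=\hat{\pi}(X)$, hence
\[
\hat{Y}_{\mathrm{MR}}=\frac{Y-\hat{\mu}_0^Y(X)-\hat{\tau}_{\mathrm{init}}(X)\bigl(A-\hat{\mu}_0^A(X)\bigr)}{\hat{\delta}_A(X)\,\hat{\pi}(X)}+\hat{\tau}_{\mathrm{init}}(X),
\]
while on $\{Z=0\}$ the first factor equals $-1$ and the denominator equals $1-\hat{\pi}(X)$, giving the analogous expression with an overall sign flip and $\hat{\pi}$ replaced by $1-\hat{\pi}$. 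Taking $\E[\,\cdot\mid Z=i,X=x]$ in each case and using $\E[Y\mid Z=i,X=x]=\mu_i^Y(x)$ and $\E[A\mid Z=i,X=x]=\mu_i^A(x)$ for $i\in\{0,1\}$ replaces $Y$ by $\mu_i^Y(x)$ and $A$ by $\mu_i^A(x)$, while every function of $X$ alone passes through unchanged.

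Then I would recombine via $\E[\hat{Y}_{\mathrm{MR}}\mid X=x]=\pi(x)\,\E[\hat{Y}_{\mathrm{MR}}\mid Z=1,X=x]+(1-\pi(x))\,\E[\hat{Y}_{\mathrm{MR}}\mid Z=0,X=x]$, using $\mathbb{P}(Z=1\mid X=x)=\pi(x)$. The $\hat{\tau}_{\mathrm{init}}(x)$ pieces combine to $\pi(x)\hat{\tau}_{\mathrm{init}}(x)+(1-\pi(x))\hat{\tau}_{\mathrm{init}}(x)=\hat{\tau}_{\mathrm{init}}(x)$, the last summand of the claim. Among the remaining terms, the $\mu_1^Y,\mu_1^A$ contributions assemble into $\tfrac{\pi(x)}{\hat{\delta}_A(x)\hat{\pi}(x)}\bigl(\mu_1^Y(x)-\mu_1^A(x)\hat{\tau}_{\mathrm{init}}(x)\bigr)$; the $\mu_0^Y,\mu_0^A$ contributions (after cancelling the two minus signs coming from the $\{Z=0\}$ factor $-1$ and the sign of $\mu_0^Y$) assemble into $\tfrac{1-\pi(x)}{\hat{\delta}_A(x)(1-\hat{\pi}(x))}\bigl(\mu_0^A(x)\hat{\tau}_{\mathrm{init}}(x)-\mu_0^Y(x)\bigr)$; and the terms involving only the hatted nuisances $\hat{\mu}_0^Y,\hat{\mu}_0^A$ share the common factor $\tfrac{\hat{\mu}_0^A(x)\hat{\tau}_{\mathrm{init}}(x)-\hat{\mu}_0^Y(x)}{\hat{\delta}_A(x)}$ times $\bigl(\tfrac{\pi(x)}{\hat{\pi}(x)}-\tfrac{1-\pi(x)}{1-\hat{\pi}(x)}\bigr)$, which is precisely the third summand.

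There is no genuine obstacle: the statement is a direct computation, and the only things to watch are the sign bookkeeping when passing from $\{Z=1\}$ to $\{Z=0\}$ (the factor $Z-(1-Z)$ flips sign while the denominator switches between $\hat{\pi}$ and $1-\hat{\pi}$) and the grouping of the resulting terms into the four displayed summands. As a sanity check I would verify that under any of the three model specifications of Theorem~\ref{thrm:robustness} the expression collapses to $\tau(x)=\delta_Y(x)/\delta_A(x)$; indeed, this lemma is the natural stepping stone both for re-deriving Theorem~\ref{thrm:robustness} and for extracting the bias expansion used in the proof of Theorem~\ref{thrm:upperbound}.
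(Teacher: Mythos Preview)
Your proposal is correct and follows essentially the same route as the paper: condition on $Z\in\{0,1\}$ via the law of total expectation, replace $Y$ and $A$ by $\mu_i^Y(x)$ and $\mu_i^A(x)$ in each branch, recombine with weights $\pi(x)$ and $1-\pi(x)$, and group the resulting terms. Your explicit attention to the sign flip from the factor $Z-(1-Z)$ on $\{Z=0\}$ and to treating the hatted nuisances as fixed functions (justified by sample splitting) is exactly what is needed, and indeed the paper subsequently uses this lemma precisely as you anticipate, both for Theorem~\ref{thrm:robustness} and for the bias term $\hat{r}(x)$ in Lemma~\ref{lem:general_bound}.
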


\begin{proof}
\begin{align}
 &\E[\hat{Y}_{\mathrm{MR}} \mid X=x]\\ =& \pi(x) \E\left[ \frac{Y - A \, \hat{\tau}_{\mathrm{init}}(X) - \hat{\mu}_0^Y(X) + \hat{\mu}_0^A(X) \, \hat{\tau}_{\mathrm{init}}(X)} {\hat{\delta}_A(X) \, \hat{\pi}(X)} \; \middle| \; X = x, Z = 1\right] \notag\\ 
 & \quad + (1 - \pi(x)) \E\left[ \frac{Y - A \, \hat{\tau}_{\mathrm{init}}(X) - \hat{\mu}_0^Y(X) + \hat{\mu}_0^A(X) \, \hat{\tau}_{\mathrm{init}}(X)}{\hat{\delta}_A(X) \, (1 - \hat{\pi}(X))} \; \middle| \; X = x, Z = 0\right] + \hat{\tau}_{\mathrm{init}}(x)\\
  =& \frac{\pi(x)}{\hat{\delta}_A(x) \, \hat{\pi}(x)} \left( \mu_1^Y(x) - \mu_1^A(x) \, \hat{\tau}_{\mathrm{init}}(x) - \hat{\mu}_0^Y(x) + \hat{\mu}_0^A(x) \, \hat{\tau}_{\mathrm{init}}(x) \right) \notag \\
 & \quad + \frac{1-\pi(x)}{\hat{\delta}_A(x) \, (1-\hat{\pi}(x))} \left(\mu_0^Y(x) - \mu_0^A(x) \, \hat{\tau}_{\mathrm{init}}(x) - \hat{\mu}_0^Y(x) + \hat{\mu}_0^A(x) \, \hat{\tau}_{\mathrm{init}}(x) \right) + \hat{\tau}_{\mathrm{init}}(x)
\end{align}

Rearranging the terms yields the desired result.
\end{proof}

\subsection{Proof of Theorem~\ref*{thrm:robustness} (multiple robustness property)}

We use Lemma~\ref{lem:second_stage} to show that under each of the three conditions it follows that $\E[\hat{Y}_{\mathrm{MR}} \mid X=x] = \tau(x)$.
\begin{enumerate}
\item \begin{align}
    &\E[\hat{Y}_{\mathrm{\mathrm{MR}}} \mid X=x] \\ =&  \frac{\pi(x)}{\hat{\delta}_A(x) \, \hat{\pi}(x)} \left(\mu_1^Y(x) - \mu_1^A(x) \, \tau(x) + \mu_0^A(x) \, \tau(x) - \mu_0^Y (x)\right) \notag \\
   & \quad + \frac{(1-\pi(x))}{\hat{\delta}_A(x) \, (1 - \hat{\pi}(x))} \left(\mu_0^A(x) \, \tau(x) -\mu_0^Y(x)  - \mu_0^A(x) \, \tau(x) + \mu_0^Y (x)\right)
    + \tau(x) \\
    =& \frac{\pi(x)}{\hat{\delta}_A(x) \, \hat{\pi}(x)} \left(\delta_Y(x) - \delta_Y(x) \right) + \tau(x) = \tau(x).
    \end{align}
\item \begin{align}\E[\hat{Y}_{\mathrm{\mathrm{MR}}} \mid X=x] &= \frac{\left(\mu_1^Y(x) - \mu_1^A(x) \, \hat{\tau}_{\mathrm{init}}(x)\right)}{\delta_A(x)}
   + \frac{\left(\mu_0^A(x) \, \hat{\tau}_{\mathrm{init}}(x) -\mu_0^Y(x) \right)}{\delta_A(x)} + \hat{\tau}_{\mathrm{init}}(x)  \\ &= \frac{\delta_Y(x) - \hat{\tau}_{\mathrm{init}}(x) \, \delta_A(x)}{\delta_A(x)} + \hat{\tau}_{\mathrm{init}}(x) = \tau(x).
   \end{align}
  
\item \begin{align}\E[\hat{Y}_{\mathrm{\mathrm{MR}}} \mid X=x] &= \frac{\left(\mu_1^Y(x) - \mu_1^A(x) \, \tau(x)\right)}{\hat{\delta}_A(x)}
   + \frac{\left(\mu_0^A(x) \, \tau(x) -\mu_0^Y(x) \right)}{\hat{\delta}_A(x)} + \tau(x) \\ &= \frac{\delta_Y(x)}{\hat{\delta}_A(x)} - \tau(x) \frac{\delta_A(x)}{\hat{\delta}_A(x)} + \tau(x) = \tau(x)
   \end{align}
\end{enumerate}

\subsection{Proof of Theorem~\ref*{thrm:upperbound} (Convergence rate of \frameworkname)}

To prove Theorem~\ref*{thrm:upperbound}, we need an additional assumption on the second stage regression estimator $\hat{\E}_n$. We refer to \citet{Kennedy.2022d} (Proposition 1) for a detailed discussion on this assumption.

\begin{assumption}[From Proposition~1 of \citet{Kennedy.2022d}]\label{ass:kennedy}
Let $Y_{\mathrm{\mathrm{MR}}}$ be the corresponding oracle to the estimated pseudo-outcome $\hat{Y}_{\mathrm{\mathrm{MR}}}$. We assume that the pseudo-regression estimator $\hat{\E}_n$ satisfies
\begin{equation}
    \frac{\hat{\E}_n[\hat{Y}_{\mathrm{MR}} \mid X = x] - \hat{\E}_n[{Y}_{\mathrm{MR}} \mid X = x] - \hat{\E}_n[\hat{Y}_{\mathrm{\mathrm{MR}}} - {Y}_{\mathrm{\mathrm{MR}}} \mid X = x]}{\sqrt{\E\left[\left(\hat{\E}_n[{Y}_{\mathrm{MR}} \mid X = x] - \E[{Y}_{\mathrm{MR}} \mid X = x] \right)^2\right]}} \overset{p}{\to} 0
\end{equation}
and 
\begin{equation}
  \E\left[\hat{\E}_n[\hat{r}(X) \mid X = x]^2 \right]  \asymp \E\left[\hat{r}(x)^2 \right],
\end{equation}
where $r(x) = \E[\hat{Y}_{\mathrm{\mathrm{MR}}} \mid X = x] - \tau(x)$
\end{assumption}

To prove Theorem~\ref{thrm:upperbound}, we derive a more general bound the depends on the pointwise mean squared errors of the nuisance estimators. Theorem~\ref{thrm:upperbound} follows immediately by applying Assumption~\ref*{ass:smoothness}.
\begin{lemma}\label{lem:general_bound}
Consider the setting described in Theorem~\ref{thrm:upperbound}. Then, 
\begin{align}\label{eq:upperbound_general}
   & \E\left[\left(\hat{\tau}_{\mathrm{init}}(x) - \tau(x)\right)^2\right] \\  \lesssim & \, \mathcal{R}(x)  + \E\left[\left(\hat{\tau}_{\mathrm{init}}(x) - \tau(x)\right)^2\right] \left( \E\left[\left(\hat{\delta}_A(x) - \delta_A(x)\right)^2\right] + \E\left[\left(\hat{\pi}(x) - \pi(x)\right)^2\right]\right) \notag
    \\ & \quad + \E\left[\left(\hat{\pi}(x) - \pi(x)\right)^2\right] \left(\E\left[\left(\hat{\mu}_0^Y(x) - \mu_0^Y(x)\right)^2\right] 
     + \E\left[\left(\hat{\mu}_0^A(x) - \mu_0^A(x)\right)^2\right] \right).
\end{align}
\end{lemma}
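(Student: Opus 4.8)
The plan is to reduce the statement, via Proposition~1 of \citet{Kennedy.2022d}, to controlling the conditional bias of the pseudo-outcome, and then to expand that bias explicitly using Lemma~\ref{lem:second_stage}. Let $Y_{\mathrm{MR}}$ denote the oracle pseudo-outcome, i.e.\ Eq.~\eqref{eq:pseudo_outcome} with $\mu_0^Y,\mu_0^A,\delta_A,\pi,\tau$ in place of their hats, and write $\hat r(x)=\E[\hat Y_{\mathrm{MR}}\mid X=x]-\tau(x)$ for the conditional bias given the first-stage samples $\mathcal D_1,\mathcal D_2$. Since $Y_{\mathrm{MR}}$ is the uncentered efficient influence function for $\E_X[\tau(X)]$, Theorem~\ref{thrm:robustness} with all nuisances equal to the truth gives $\E[Y_{\mathrm{MR}}\mid X=x]=\tau(x)$ almost surely, so the pure second-stage (oracle) risk is $\mathcal R(x)=\E\bigl[(\hat\E_n[Y_{\mathrm{MR}}\mid X=x]-\tau(x))^2\bigr]$. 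Decomposing
\[
\hat\tau_{\mathrm{MRIV}}(x)-\tau(x)=\bigl(\hat\E_n[Y_{\mathrm{MR}}\mid X=x]-\tau(x)\bigr)+\hat\E_n[\hat Y_{\mathrm{MR}}-Y_{\mathrm{MR}}\mid X=x]+R_n(x),
\]
with $R_n(x)$ the numerator appearing in the first display of Assumption~\ref{ass:kennedy}, the stability part of Assumption~\ref{ass:kennedy} makes $R_n(x)^2=o_p(\mathcal R(x))$, and its second part gives $\E\bigl[\hat\E_n[\hat Y_{\mathrm{MR}}-Y_{\mathrm{MR}}\mid X=x]^2\bigr]\asymp\E[\hat r(x)^2]$. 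Hence it suffices to bound $\E[\hat r(x)^2]$ by the right-hand side of Lemma~\ref{lem:general_bound} minus $\mathcal R(x)$.

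The heart of the proof is an explicit second-order expansion of $\hat r(x)$ from Lemma~\ref{lem:second_stage}. Using $\delta_Y=\tau\delta_A$, $\tfrac{\pi}{\hat\pi}-\tfrac{1-\pi}{1-\hat\pi}=\tfrac{\pi-\hat\pi}{\hat\pi(1-\hat\pi)}$ and $\tfrac1{\hat\pi(1-\hat\pi)}=\tfrac1{\hat\pi}+\tfrac1{1-\hat\pi}$, I would regroup the expression of Lemma~\ref{lem:second_stage} into a ``$1/\hat\delta_A$'' block and a ``$(\pi-\hat\pi)/\hat\delta_A$'' block. The first block, together with the leftover $\hat\tau_{\mathrm{init}}(x)-\tau(x)$, collapses to $(\hat\tau_{\mathrm{init}}(x)-\tau(x))\,\tfrac{\hat\delta_A(x)-\delta_A(x)}{\hat\delta_A(x)}$. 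In the second block, the $1/(1-\hat\pi)$ contribution telescopes to $\hat\tau_{\mathrm{init}}(x)(\hat\mu_0^A(x)-\mu_0^A(x))-(\hat\mu_0^Y(x)-\mu_0^Y(x))$, while the $1/\hat\pi$ contribution, after substituting $\mu_1^Y=\mu_0^Y+\tau(\mu_1^A-\mu_0^A)$ and adding and subtracting $\tau\hat\mu_0^A$, equals $\tau(x)(\hat\mu_0^A(x)-\mu_0^A(x))-(\hat\mu_0^Y(x)-\mu_0^Y(x))+(\hat\tau_{\mathrm{init}}(x)-\tau(x))(\hat\mu_0^A(x)-\mu_1^A(x))$. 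Consequently $\hat r(x)$ is a finite sum of terms, each a product of at least two nuisance errors once we write the lone bounded factor $\hat\mu_0^A-\mu_1^A=(\hat\mu_0^A-\mu_0^A)-\delta_A$, so that it is paired either with a genuine product of errors or with $(\pi-\hat\pi)(\hat\tau_{\mathrm{init}}-\tau)$.

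With the expansion in hand the remainder is mechanical. Assumption~\ref{ass:boundedness} bounds every denominator ($|\hat\delta_A|\ge\widetilde\rho$, $\epsilon\le\hat\pi\le1-\epsilon$) and every stray coefficient ($|\hat\tau_{\mathrm{init}}|\le K$; $|\mu^Y_i|\le C$ together with $|\delta_A|\ge\rho$ make $|\tau|$ and hence $|\hat\tau_{\mathrm{init}}-\tau|$ bounded; $|\mu_1^A|\le1$), so after $(\sum_i a_i)^2\lesssim\sum_i a_i^2$ we obtain
\[
\hat r(x)^2\lesssim(\hat\tau_{\mathrm{init}}(x)-\tau(x))^2(\hat\delta_A(x)-\delta_A(x))^2+(\hat\pi(x)-\pi(x))^2\Bigl[(\hat\mu_0^Y(x)-\mu_0^Y(x))^2+(\hat\mu_0^A(x)-\mu_0^A(x))^2+(\hat\tau_{\mathrm{init}}(x)-\tau(x))^2\Bigr].
\]
Taking expectations and using that $\hat\tau_{\mathrm{init}},\hat\mu_0^Y,\hat\mu_0^A$ are fitted on $\mathcal D_1$ while $\hat\pi,\hat\delta_A$ are fitted on the independent sample $\mathcal D_2$, each cross-product factorizes into a product of expectations, and collecting terms yields exactly the bound of Lemma~\ref{lem:general_bound}. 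The main obstacle is the expansion of the second paragraph: because the multiply-robust parametrization entangles five nuisances, the add-and-subtract steps must be chosen so that no genuinely first-order (single-error) term survives -- in particular one has to verify that the stray bounded factor $\hat\mu_0^A-\mu_1^A$ is always multiplied by a product of two errors, which is ultimately what produces the bilinear structure of the bound.
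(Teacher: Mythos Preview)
Your proposal is correct and follows essentially the same approach as the paper: both invoke Proposition~1 of \citet{Kennedy.2022d} via Assumption~\ref{ass:kennedy} to reduce the problem to bounding $\E[\hat r(x)^2]$, expand $\hat r(x)$ explicitly from Lemma~\ref{lem:second_stage}, show it is a finite sum of products of two nuisance errors (up to bounded factors), square using $(a+b)^2\le 2(a^2+b^2)$ and Assumption~\ref{ass:boundedness}, and finally factorize expectations using the $\mathcal D_1$--$\mathcal D_2$ independence from sample splitting. The only difference is cosmetic bookkeeping in the middle step: the paper keeps the common factor $(\pi-\hat\pi)/[\hat\delta_A\hat\pi(1-\hat\pi)]$ intact and directly groups the remaining terms into $(\hat\tau_{\mathrm{init}}-\tau)+\tfrac{\pi\delta_A}{\hat\pi\hat\delta_A}(\tau-\hat\tau_{\mathrm{init}})$, whereas you first split $\tfrac{1}{\hat\pi(1-\hat\pi)}=\tfrac{1}{\hat\pi}+\tfrac{1}{1-\hat\pi}$ and then handle each piece; both routes arrive at the same bilinear structure and the identical pointwise bound on $\hat r(x)^2$.
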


\begin{proof}
Let $Y_{\mathrm{\mathrm{MR}}}$ be the corresponding oracle to $\hat{Y}_{\mathrm{\mathrm{MR}}}$ and define $\widetilde{\tau}_{\mathrm{MRIV}}(x) = \hat{\E}_n[Y_{\mathrm{\mathrm{MR}}} \mid X = x]$.
Using Assumption~\ref{ass:kennedy}, we can apply Proposition~1 of \citet{Kennedy.2022d} and obtain
\begin{equation}
    \E\left[\left(\hat{\tau}_{\mathrm{init}}(x) - \tau(x)\right)^2\right] \lesssim  \mathcal{R}(x) + \E\left[\hat{r}(x)^2 \right],
\end{equation}
where $\mathcal{R}(x) = \E\left[\left(\widetilde{\tau}_{\mathrm{\mathrm{MR}}}(x) - \tau(x)\right)^2\right]$ is the oracle risk of the second stage regression.
We can apply Lemma~\ref{lem:second_stage} to obtain
\begin{align}
   \hat{r}(x) &=  \frac{\pi(x)}{\hat{\delta}_A(x) \, \hat{\pi}(x)} \left(\mu_1^Y(x) - \mu_1^A(x) \, \hat{\tau}_{\mathrm{init}}(x)\right)
   + \frac{(1-\pi(x))}{\hat{\delta}_A(x) \, (1 - \hat{\pi}(x))} \left(\mu_0^A(x) \, \hat{\tau}_{\mathrm{init}}(x) -\mu_0^Y(x) \right) \notag \\ & \quad + 
   \frac{\hat{\mu}_0^A(x) \, \hat{\tau}_{\mathrm{init}}(x) - \hat{\mu}_0^Y (x)}{\hat{\delta}_A(x)} \left(\frac{\pi(x)}{\hat{\pi}(x)} - \frac{1- \pi(x)}{1-\hat{\pi}(x)} \right) + \hat{\tau}_{\mathrm{init}}(x) - \tau(x) \\
   &= \left(\frac{\mu_1^Y(x) - \mu_0^Y(x)}{\hat{\delta}_A(x)} \right) \frac{\pi(x)}{\hat{\pi}(x)} + \frac{\mu_0^Y(x) - \hat{\mu}_0^Y(x)}{\hat{\delta}_A(x)} \left(  \frac{\pi(x)}{\hat{\pi}(x)} -  \frac{1 - \pi(x)}{1 - \hat{\pi}(x)} \right)  + \left(\hat{\tau}_{\mathrm{init}}(x) - \tau(x)\right) \notag \\
   & \quad + \left( \frac{(\mu_0^A(x) - \mu_1^A(x)) \, \hat{\tau}_{\mathrm{init}}(x)}{\hat{\delta}_A(x)} \right) \frac{\pi(x)}{\hat{\pi}(x)} + \frac{(\hat{\mu}_0^D(x) - \mu_0^D(x)) \, \hat{\tau}_{\mathrm{init}}(x)}{\hat{\delta}_A(x)} \left(  \frac{\pi(x)}{\hat{\pi}(x)} -  \frac{1 - \pi(x)}{1 - \hat{\pi}(x)} \right) \\
   &= \frac{\delta_Y(x) \, \pi(x)}{\hat{\delta}_A(x) \, \hat{\pi}(x)} + \frac{\left(\mu_0^Y(x) - \hat{\mu}_0^Y(x)\right) \left(\pi(x) - \hat{\pi}(x)\right)}{\hat{\delta}_A(x) \, \hat{\pi}(x) \left(1 - \hat{\pi}(x) \right)} + \left(\hat{\tau}_{\mathrm{init}}(x) - \tau(x)\right) \notag \\
   & \quad - \frac{\delta_A(x) \, \pi(x) \, \hat{\tau}_{\mathrm{init}}(x)}{\hat{\delta}_A(x) \, \hat{\pi}(x)} + \frac{\left(\hat{\mu}_0^A(x) - \mu_0^A(x)\right) \hat{\tau}_{\mathrm{init}}(x) \left(\pi(x) - \hat{\pi}(x) \right)}{\hat{\delta}_A(x) \, \hat{\pi}(x) \left(1 - \hat{\pi}(x)\right)} \\
   &=\frac{\left(\pi(x) - \hat{\pi}(x) \right)}{\hat{\delta}_A(x) \, \hat{\pi}(x) \left(1 - \hat{\pi}(x)\right)}\left[\left(\mu_0^Y(x) - \hat{\mu}_0^Y(x)\right) + \left(\hat{\mu}_0^A(x) - \mu_0^A(x)\right)\hat{\tau}_{\mathrm{init}}(x)\right] \notag \\
   & \quad + \left(\hat{\tau}_{\mathrm{init}}(x) - \tau(x)\right) + \frac{\pi(x) \delta_A(x)}{\hat{\pi}(x) \hat{\delta}_A(x)} \left(\tau(x) - \hat{\tau}_{\mathrm{init}}(x)\right) \\
   &= \frac{\left(\pi(x) - \hat{\pi}(x) \right)}{\hat{\delta}_A(x) \, \hat{\pi}(x) \left(1 - \hat{\pi}(x)\right)}\left[\left(\mu_0^Y(x) - \hat{\mu}_0^Y(x)\right) + \left(\hat{\mu}_0^A(x) - \mu_0^A(x)\right)\hat{\tau}_{\mathrm{init}}(x)\right] \notag \\
   & \quad + \left(\tau(x) - \hat{\tau}_{\mathrm{init}}(x)\right) \left(\delta_A(x) - \hat{\delta}_A(x)\right) \pi(x) + \left(\tau(x) - \hat{\tau}_{\mathrm{init}}(x)\right) \left(\pi(x) - \hat{\pi}(x)\right) \hat{\delta}_A(x).
\end{align}

Applying the inequality $(a+b)^2 \leq 2(a^2 + b^2)$ together with Assumption~\ref*{ass:boundedness} and the fact that $\pi(x) \leq 1$ yields
\begin{align}
    \hat{r}(x)^2 & \leq \frac{4}{\epsilon^4 \rho^2} \left(\pi(x) - \hat{\pi}(x) \right)^2 \left[\left(\mu_0^Y(x) - \hat{\mu}_0^Y(x)\right)^2 + \left(\hat{\mu}_0^A(x) - \mu_0^A(x)\right)^2 K^2\right] \notag\\
    & \quad +4 \left(\tau(x) - \hat{\tau}_{\mathrm{init}}(x)\right)^2 \left(\delta_A(x) - \hat{\delta}_A(x)\right)^2 + 4\left(\tau(x) - \hat{\tau}_{\mathrm{init}}(x)\right)^2 \left(\pi(x) - \hat{\pi}(x)\right)^2 .
\end{align}

By setting $\widetilde{K} = \max\{K,1\}$, we obtain
\begin{align}
    \hat{r}(x)^2 & \leq \frac{4\widetilde{K}^2}{\epsilon^4 \rho^2}\left( \left(\pi(x) - \hat{\pi}(x) \right)^2 \left[\left(\mu_0^Y(x) - \hat{\mu}_0^Y(x)\right)^2 + \left(\hat{\mu}_0^A(x) - \mu_0^A(x)\right)^2 + \left(\hat{\tau}_{\mathrm{init}}(x) - \tau(x)\right)^2\right] \right. \notag\\
    & \quad + \left. \left(\tau(x) - \hat{\tau}_{\mathrm{init}}(x)\right)^2 \left(\delta_A(x) - \hat{\delta}_A(x)\right)^2 \right).
\end{align}

Applying expectations on both sides yields the results, because $(\hat{\pi}(x), \hat{\delta}_A(x)) \indep (\hat{\mu}_0^Y(x), \hat{\mu}_0^A(x), \hat{\tau}_{\mathrm{init}}(x))$ due to sample splitting.
\end{proof}

\subsection{Proof of Theorem~\ref*{thrm:rate_wald} (Convergence rate of the Wald estimator)}

\begin{proof}
We define $\widetilde{C} = \max\{C,1\}$ and obtain the upper bound
\begin{align}\label{eq:rate_wald_general}
      &  (\hat{\tau}_W(x) - \tau(x))^2\\ =& \left(\frac{(\hat{\mu}_1^Y(x) - \mu_1^Y(x)) \, \delta_A(x) + (\mu_0^Y(x) - \hat{\mu}_0^Y(x)) \, \delta_A(x) + (\delta_A(x) - \hat{\delta}_A(x)) \, \delta_Y(x)}{\delta_A(x) \, \hat{\delta}_A(x)}\right)^2 \\
         \leq & \frac{4 \widetilde{C}^2}{\rho^2 \widetilde{\rho}^2} \left[(\hat{\mu}_1^Y(x) - \mu_1^Y(x))^2 + (\hat{\mu}_0^Y(x) - \mu_0^Y(x))^2 + (\delta_A(x) - \hat{\delta}_A(x))^2 \right]\\
        \leq & \frac{8 \widetilde{C}^2}{\rho^2 \widetilde{\rho}^2} \left[(\hat{\mu}_1^Y(x) - \mu_1^Y(x))^2 + (\hat{\mu}_0^Y(x) - \mu_0^Y(x))^2 + (\hat{\mu}_1^A(x) - \mu_1^A(x))^2 \right. \notag \\ & \quad \quad  \quad \left.+ (\hat{\mu}_0^A(x) - \mu_0^A(x))^2 \right],
\end{align}
where we used the inequality $(a+b)^2 \leq 2(a^2 + b^2)$ several times.
Taking expectations and applying the smoothness assumptions yields the result.
\end{proof}

\clearpage

\section{Theoretical analysis under sparsity assumptions}\label{app:sparsity}

In Sec.~\ref{sec:theory}, we analyzed \frameworkname theoretically by imposing smoothness assumptions on the underlying data generating process. In particular, we derived a multiple robust convergence rate and showed that \frameworkname outperforms the Wald estimator if the oracle CATE is smoother than its components. In this section, we derive similar results by relying on a different set of assumptions. Instead of using smoothness, we make assumptions on the level of sparsity of the CATE components. This assumption is often imposed in high-dimensional settings ($n < p $) and is in line with previous literature on analyzing CATE estimators \citep{Curth.2021, Kennedy.2022d}.

In the following, we say a function $f(x)$ is $k$-sparse, if it is linear in $x \in \R^p$ and it only depends on $k < \min\{n, p\}$ predictors. \citep{Yang.2015} showed, that in this case the minimax rate of $f(x)$ is given by $\frac{k \log(p)}{n}$. The linearity assumption can be relaxed to an additive structural assumption, which we omit here for simplicity. In the following, we replace the smoothness conditions in Assumption~\ref{ass:smoothness} with sparsity conditions.

\begin{assumption}[Sparsity]\label{ass:sparsity}
We assume that (1)~the nuisance components $\mu_i^Y(\cdot)$ are $\alpha$-sparse, $\mu_i^A(\cdot)$ and $\delta_A(\cdot)$ are $\beta$-sparse, and $\pi(\cdot)$ is $\delta$-sparse; (2)~all nuisance components are estimated with their respective minimax rate of $\frac{k \log(p)}{n}$, where $k \in \{\alpha, \beta, \delta\}$; and (3)~the oracle CATE $\tau(\cdot)$ is $\gamma$-sparse and the initial CATE estimator $\hat{\tau}_{\mathrm{init}}$ converges with rate $r_{\tau}(n)$.
\end{assumption}

We restate now our result from Theorem~\ref{thrm:rate_wald} for \frameworkname using the sparsity assumption.

\begin{theorem}[\frameworkname upper bound under sparsity]\label{thrm:upperbound_sparsity}
We consider the same setting as in Theorem~\ref*{thrm:upperbound} under the sparsity assumption \ref{ass:sparsity}.
If the second-stage estimator $\hat{\E}_n$ yields the minimax rate $\frac{\gamma \log(p)}{n}$ and satisfies Assumption~\ref{ass:kennedy}, the oracle risk is upper bounded by
\begin{equation*}\label{eq:upperbound_mriv_sparsity}
\begin{split}
    \E\left[\left(\hat{\tau}_{\mathrm{MRIV}}(x) - \tau(x)\right)^2\right] & \lesssim  \frac{\gamma \log(p)}{n} + r_{\tau}(n) \frac{(\beta + \delta) \log(p)}{n} + 
    \frac{(\alpha + \beta) \delta \log^2(p)}{n^2}.
\end{split}
\end{equation*}
\end{theorem}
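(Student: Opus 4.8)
The plan is to mirror the proof of Theorem~\ref{thrm:upperbound} exactly, simply swapping the smoothness-based minimax rates for the sparsity-based ones. The heavy lifting has already been done: Lemma~\ref{lem:general_bound} gives a bound on the oracle risk of $\hat\tau_{\mathrm{MRIV}}$ in terms of $\mathcal{R}(x)$ (the second-stage oracle risk) and the pointwise mean squared errors of the nuisance estimators $\hat\pi$, $\hat\delta_A$, $\hat\mu_0^Y$, $\hat\mu_0^A$, together with $\E[(\hat\tau_{\mathrm{init}}(x)-\tau(x))^2] = r_\tau(n)$. That bound is purely algebraic and makes no reference to which assumption (smoothness vs.\ sparsity) controls the nuisance rates, so it applies verbatim here provided Assumption~\ref{ass:kennedy} holds for the second-stage estimator, which is assumed.

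Concretely, I would first invoke Lemma~\ref{lem:general_bound} to get
\begin{equation*}
\E\left[\left(\hat{\tau}_{\mathrm{MRIV}}(x) - \tau(x)\right)^2\right] \lesssim \mathcal{R}(x) + r_\tau(n)\left(\E[(\hat\delta_A-\delta_A)^2] + \E[(\hat\pi-\pi)^2]\right) + \E[(\hat\pi-\pi)^2]\left(\E[(\hat\mu_0^Y-\mu_0^Y)^2] + \E[(\hat\mu_0^A-\mu_0^A)^2]\right).
\end{equation*}
Then I would substitute the sparsity minimax rates from Assumption~\ref{ass:sparsity}: $\mathcal{R}(x) \asymp \gamma\log(p)/n$ since $\hat\E_n$ attains the minimax rate for the $\gamma$-sparse CATE; $\E[(\hat\delta_A-\delta_A)^2]\lesssim \beta\log(p)/n$ and $\E[(\hat\mu_0^A-\mu_0^A)^2]\lesssim \beta\log(p)/n$ since $\mu_i^A$ and $\delta_A$ are $\beta$-sparse; $\E[(\hat\mu_0^Y-\mu_0^Y)^2]\lesssim \alpha\log(p)/n$ since $\mu_i^Y$ is $\alpha$-sparse; and $\E[(\hat\pi-\pi)^2]\lesssim \delta\log(p)/n$ since $\pi$ is $\delta$-sparse. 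Collecting terms gives $r_\tau(n)(\beta+\delta)\log(p)/n$ for the middle term and $(\delta\log(p)/n)\cdot((\alpha+\beta)\log(p)/n) = (\alpha+\beta)\delta\log^2(p)/n^2$ for the last, yielding the stated bound.

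There is essentially no genuine obstacle here — this is a substitution exercise once Lemma~\ref{lem:general_bound} is in hand. The only point requiring a sentence of care is checking that Proposition~1 of \citet{Kennedy.2022d}, which underlies Lemma~\ref{lem:general_bound}, applies with the sparsity-based second-stage estimator: this is exactly what Assumption~\ref{ass:kennedy} (hypothesized in the theorem statement) guarantees, together with the minimax-rate hypothesis on $\hat\E_n$. I would also note in passing that the boundedness conditions of Assumption~\ref{ass:boundedness} are still needed — they enter through the constants $\epsilon,\rho,\widetilde\rho,K$ in the bound on $\hat r(x)^2$ inside the proof of Lemma~\ref{lem:general_bound} — and that the sample-splitting independence $(\hat\pi,\hat\delta_A)\indep(\hat\mu_0^Y,\hat\mu_0^A,\hat\tau_{\mathrm{init}})$ is what lets the cross terms factor into products of mean squared errors, just as in the smoothness case. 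Hence the proof is: apply Lemma~\ref{lem:general_bound}, plug in the rates from Assumption~\ref{ass:sparsity}, and simplify.
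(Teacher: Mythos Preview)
Your proposal is correct and matches the paper's own proof essentially verbatim: the paper simply states that the result follows immediately from Lemma~\ref{lem:general_bound} by applying Assumption~\ref{ass:sparsity}, which is precisely the substitution argument you outline. The additional remarks you make about Assumption~\ref{ass:kennedy}, boundedness, and sample-splitting independence are accurate but already absorbed into Lemma~\ref{lem:general_bound}, so nothing further is needed.
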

\begin{proof}
Follows immediately from Lemma~\ref{lem:general_bound} by applying Ass-~\ref{ass:sparsity}.
\end{proof}

Again, we obtain a multiple robust convergence rate for \frameworkname in the sense that \frameworkname achieves a fast rate even if the initial estimator or several nuisance estimators converge slowly. More precisely, for a fast convergence rate of $\hat{\tau}_{\mathrm{MRIV}}(x)$, it is sufficient if either: (1)~$r_{\tau}(n)$ decreases fast and $\delta$ is small; (2)~$r_{\tau}(n)$ decreases fast and $\alpha$ and $\beta$ are small; or (3)~all $\alpha$, $\beta$, and $\delta$ are small. 

We derive now the corresponding rate for the Wald estimator.

\begin{theorem}[Wald oracle upper bound]\label{thrm:rate_wald_sparsity}
Given estimators $\hat{\mu}_i^Y(x)$ and $\hat{\mu}_i^A(x)$. Let $\hat{\delta}_A(x) = \hat{\mu}_1^A(x) - \hat{\mu}_0^A(x)$ satisfy Assumption~\ref{ass:boundedness}. Then, under Assumption~\ref{ass:sparsity} the oracle risk of the Wald estimator $\hat{\tau}_W(x)$ is bounded by
\begin{equation}\label{eq:upperbound_wald_sparsity}
    \E\left[(\hat{\tau}_{\mathrm{W}}(x) - \tau(x))^2\right] \lesssim  \frac{(\alpha + \beta) \log(p)}{n}
\end{equation}
\end{theorem}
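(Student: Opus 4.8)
\textbf{Proof plan for Theorem~\ref{thrm:rate_wald_sparsity} (Wald upper bound under sparsity).}
The plan is to reuse the pointwise algebraic decomposition of the Wald error that was already carried out in the proof of Theorem~\ref{thrm:rate_wald}, since that decomposition is purely algebraic and does not depend on whether we subsequently invoke smoothness or sparsity. Concretely, I would start from the identity
\begin{equation*}
  \hat{\tau}_W(x) - \tau(x) = \frac{(\hat{\mu}_1^Y(x) - \mu_1^Y(x))\,\delta_A(x) + (\mu_0^Y(x) - \hat{\mu}_0^Y(x))\,\delta_A(x) + (\delta_A(x) - \hat{\delta}_A(x))\,\delta_Y(x)}{\delta_A(x)\,\hat{\delta}_A(x)},
\end{equation*}
which follows by writing $\hat{\tau}_W = \hat{\delta}_Y/\hat{\delta}_A$, $\tau = \delta_Y/\delta_A$ over a common denominator and expanding $\hat{\delta}_Y - \delta_Y$ and $\delta_Y - \tau\,\hat{\delta}_A$. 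Then Assumption~\ref{ass:boundedness} (boundedness of $\mu_i^Y$ by $C$, hence of $\delta_Y$ by $2C$, and lower bounds $\rho, \widetilde\rho$ on $|\delta_A|$ and $|\hat\delta_A|$) lets me bound the denominator away from zero and the numerator factor $\delta_Y$ above, giving, after two applications of $(a+b)^2 \le 2(a^2+b^2)$ and $\delta_A(x) - \hat\delta_A(x) = (\hat\mu_1^A(x)-\mu_1^A(x)) - (\hat\mu_0^A(x)-\mu_0^A(x))$,
\begin{equation*}
  (\hat\tau_W(x) - \tau(x))^2 \;\lesssim\; (\hat\mu_1^Y(x)-\mu_1^Y(x))^2 + (\hat\mu_0^Y(x)-\mu_0^Y(x))^2 + (\hat\mu_1^A(x)-\mu_1^A(x))^2 + (\hat\mu_0^A(x)-\mu_0^A(x))^2,
\end{equation*}
exactly as in Eq.~\eqref{eq:rate_wald_general}.

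The only new step relative to Theorem~\ref{thrm:rate_wald} is the final one: take expectations and substitute the \emph{sparsity} minimax rates from Assumption~\ref{ass:sparsity} in place of the Hölder rates. Since $\mu_i^Y$ are $\alpha$-sparse and $\mu_i^A$ are $\beta$-sparse, each is estimated at rate $\frac{\alpha \log p}{n}$ resp. $\frac{\beta \log p}{n}$, so
\begin{equation*}
  \E\!\left[(\hat\tau_W(x) - \tau(x))^2\right] \;\lesssim\; \frac{\alpha \log p}{n} + \frac{\beta \log p}{n} \;=\; \frac{(\alpha+\beta)\log p}{n},
\end{equation*}
which is the claimed bound. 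I would note that the constant absorbed by $\lesssim$ depends on $C$, $\rho$, and $\widetilde\rho$ through the factor $\widetilde C^2/(\rho^2\widetilde\rho^2)$ with $\widetilde C = \max\{C,1\}$, exactly as in the smooth case.

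I do not anticipate a genuine obstacle here: the theorem is essentially a corollary of the already-established pointwise decomposition, and the proof sketch in the excerpt (``Follows immediately by applying Assumption~\ref{ass:sparsity}'') is accurate. The only point requiring a line of care is confirming that $\hat\delta_A(x) = \hat\mu_1^A(x) - \hat\mu_0^A(x)$ indeed satisfies the lower bound $\widetilde\rho$ required by Assumption~\ref{ass:boundedness}, which is assumed in the hypothesis of the theorem, so nothing further is needed. If one wanted to be fully rigorous about the expectation step, one should also observe that the four nuisance mean-squared errors are each individually controlled by their marginal minimax rates regardless of dependence among the estimators, since we are only using linearity of expectation and not any independence.
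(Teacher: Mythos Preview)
Your proposal is correct and follows essentially the same route as the paper: the paper's proof simply says ``Follows immediately from the proof of Theorem~\ref{thrm:rate_wald}, i.e., from Eq.~\eqref{eq:rate_wald_general} by applying Assumption~\ref{ass:sparsity},'' and you have spelled out exactly that argument, including the same algebraic decomposition, the same use of Assumption~\ref{ass:boundedness} with $\widetilde C = \max\{C,1\}$, and the same substitution of the sparsity minimax rates at the end.
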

\begin{proof}
Follows immediately from the proof of Theorem~\ref{thrm:rate_wald}, i.e., from Eq.\eqref{eq:rate_wald_general} by applying Ass-~\ref{ass:sparsity}.
\end{proof}

If $\alpha = \beta = \delta$, we obtain the rates
\begin{equation}
 \E\left[\left(\hat{\tau}_{\mathrm{MRIV}}(x) - \tau(x)\right)^2\right] \lesssim  \frac{\gamma \log(p)}{n} + \frac{\alpha^2 \log^2(p)}{n^2} \quad \text{and} \quad
        \E\left[(\hat{\tau}_{\mathrm{W}}(x) - \tau(x))^2\right] \lesssim  \frac{\alpha \log(p)}{n},
\end{equation}

which means that $\hat{\tau}_{\mathrm{MRIV}}(x)$ outperforms $\hat{\tau}_{\mathrm{W}}(x)$ for $\gamma < \alpha$, i.e., if the oracle CATE is more sparse than its components.

\clearpage

\section{Mathematical details regarding Assumption~\ref{ass:smoothness}}\label{app:rates}
In this section, we briefly state the formal definitions of the convergence ratesin Assumption~\ref{ass:smoothness}. We follow \citet{Stone.1980}. Let $\theta$ be a parameter and $T(\theta)$ some target functional that we want to estimate.
\begin{definition}
A sequence of estimators $\hat{T}_n(\theta)$ of a functional $T(\theta)$ converges with rate $r_{T}(n)$ if
\begin{equation}
   \lim_{c \to 0} \liminf_{n \in \N} \sup_{\theta \in \Theta} \mathbb{P}_\theta( \hat{T}_n(\theta) - T(\theta)| > c r_{\theta}(n)) = 0. 
\end{equation}
\end{definition}
\begin{definition}
A rate $r_{T}(n)$ is called an upper bound to the rate of convergence if for all estimators $\hat{T}_n(\theta)$ it holds for all $c > 0$ that
\begin{equation}
    \liminf_{n \in \N} \sup_{\theta \in \Theta} \mathbb{P}_\theta( |\hat{T}_n(\theta) - T(\theta)|| > c r_{\theta}(n)) > 0. 
\end{equation}
and \begin{equation}
   \lim_{c \to 0} \liminf_{n \in \N} \sup_{\theta \in \Theta} \mathbb{P}_\theta( \hat{T}_n(\theta) - T(\theta)| > c r_{\theta}(n)) = 1. 
\end{equation}
$r_{T}(n)$ is called optimal if it is both achievable and an upper bound.
\end{definition}
\citet{Stone.1980} showed that for a nonparametric regression problem with an $\eta$-smooth regression function, the optimal rate of convergence is $n^{-\frac{2\eta}{2\eta + p}}$, where $p$ is the dimension of the covariate space.

\clearpage

\section{Simulated data}\label{app:sim}

In the following, we describe how we simulate synthetic data for the experiments in Sec.~\ref{sec:exp_sim} from the main paper. As mentioned therein, we simulate the CATE components from Gaussian processes using the prior induced by the Matern kernel \citep{Rasmussen.2008}
\begin{equation}
    K_{\ell, \nu}(x_i, x_j) = \frac{1}{\Gamma(\nu)2^{\nu-1}}\left( \frac{\sqrt{2\nu}}{\ell} \| x_i - x_j\|_2  \right)^\nu K_\nu \left(\frac{\sqrt{2\nu}}{\ell} \| x_i - x_j\|_2 \right),
\end{equation}
where $\Gamma(\cdot)$ is the Gamma function and $K_\nu(\cdot)$ is the modified Bessel function of second kind. Here, $\ell$ is the length scale of the kernel and $\nu$ controls the smoothness of the sampled functions. 

We set $\ell = 1$ and sample functions $\delta_Y \sim \mathcal{GP}(0, K_{\ell, \gamma})$, $\mu_0^Y \sim \mathcal{GP}(0, K_{\ell, \alpha})$, $f_1 \sim \mathcal{GP}(0, K_{\ell, \beta})$, $f_0 \sim \mathcal{GP}(0, K_{\ell, \beta})$ and $g \sim \mathcal{GP}(0, K_{\ell, \beta})$. 
Then, we define $\mu_1^Y = \delta_Y + \mu_0^Y$, $\mu_1^A = 0.3 \cdot \sigma \circ f_1 + 0.7$, $\mu_0^A = 0.3 \cdot \sigma \circ f_0$, $\delta_A = \mu_1^A - \mu_0^A$, $\mu_0^Y = c_0 \delta_A$, and $\pi = \sigma \circ g$. Finally, we set the oracle CATE to 
\begin{equation}
 \tau = \frac{\mu_1^Y - \mu_0^Y}{\mu_1^A - \mu_0^A} = \frac{\delta_Y}{\delta_A}.
\end{equation}
Note that we can create a setup where the CATE $\tau$ is smoother than its components by using a small $\alpha / \beta$ ratio. An example is shown in Fig.~\ref{fig:sim}.

\begin{figure}[h]
\centering
\includegraphics[width=0.6\linewidth]{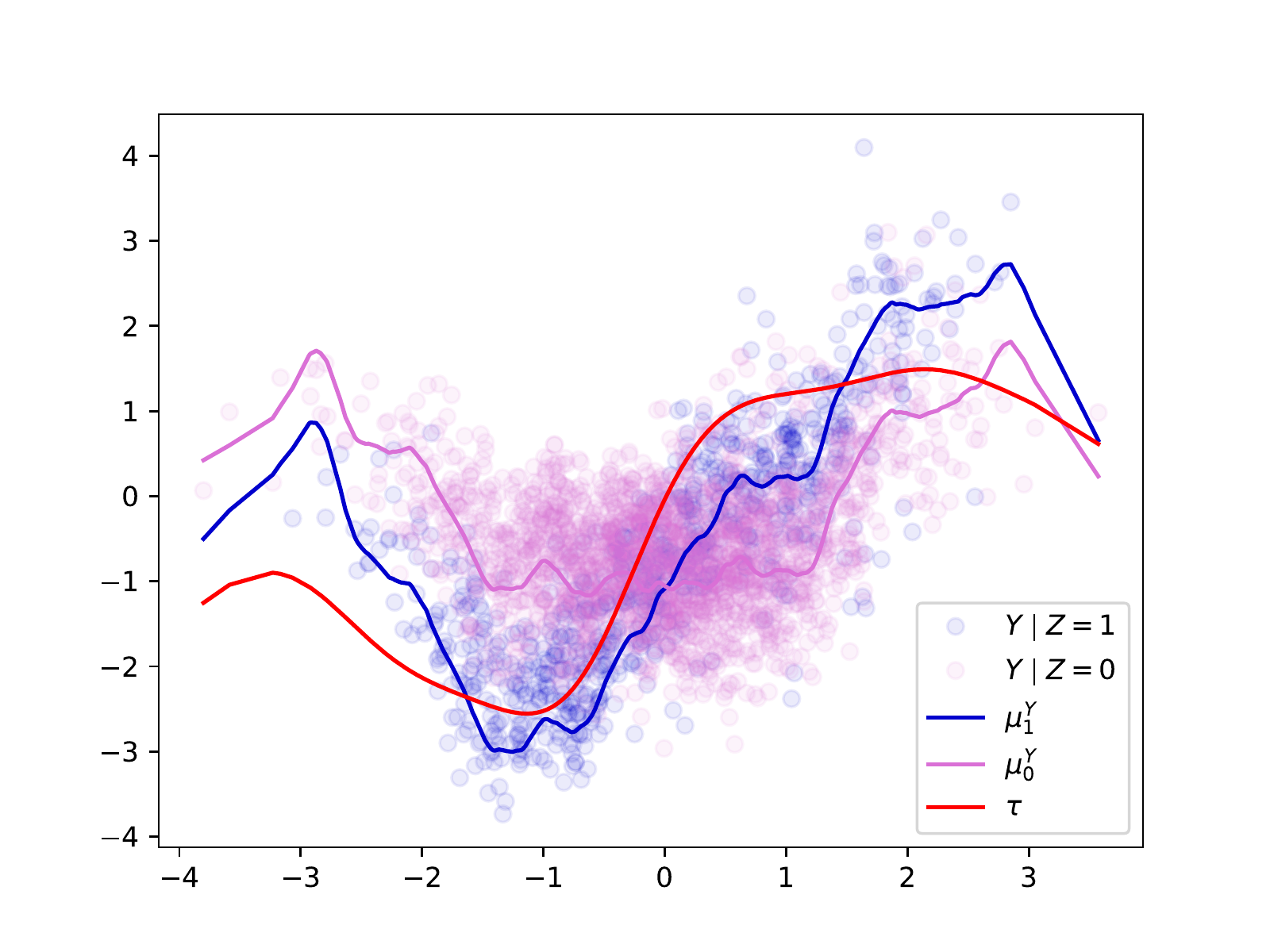}
\vspace{-0.3cm}
\caption{Gaussian process simulation for $\alpha = 1.5$ and $\beta = 50$.}
\label{fig:sim}
\end{figure}

In the following, we describe how we generate data the $(X, Z, A, Y)$ using the CATE components $\mu_i^Y(x)$, $\mu_i^A(x)$, and $\pi(x)$. We begin by sampling $n$ observed confounder $X \sim \mathcal{N}(0,1)$, unobserved confounders $U \sim \mathcal{N}\left(0, 0.2^2\right)$, and instruments $Z \sim \mathrm{Bernoulli}(\pi(X))$. Then, we obtain treatments via
\begin{equation}\label{eq:sim_treat}
     A = Z \, \mathbbm{1}\{U + \epsilon_{A} > \alpha_1(X)\} + (1-Z) \, \mathbbm{1}\{U + \epsilon_{A} > \alpha_0(X)\}
\end{equation}
with indicator function $\mathbbm{1}$, noise $\epsilon_{A} \sim \mathcal{N}\left(0, 0.1^2\right)$, and $\alpha_i(X) = \Phi^{-1}\left(1 - \mu_i^A(X)\right) \sqrt{0.1^2 + 0.2^2}$, where $\Phi^{-1}$ denotes the quantile function of the standard normal distribution. Finally, we generate the outcomes via

\begin{align}\label{eq:sim_outcome}
     Y &= A \left( \frac{(\mu_1^A(X) - 1)\mu_0^Y(X) - \mu_0^A(X)\mu_1^Y(X) + \mu_1^Y(X)}{\delta_A(X)}\right) \\& \quad + (1-A) \left( \frac{\mu_1^A(X)\mu_0^Y(X) - \mu_0^A(X)\mu_1^Y(X)}{\delta_A(X)} \right) + \alpha_U U +\epsilon_Y,
\end{align}
where $\epsilon_Y \sim \mathcal{N}\left(0, 0.3^2\right)$ is noise and $\alpha_U> 0$ is a parameter indicating the level of unobserved confounding. 
This choice of $A$ and $Y$ in Eq.~\eqref{eq:sim_treat} and Eq.~\eqref{eq:sim_outcome}, respectively, implies that $\tau(x)$ is indeed the CATE, \ie, it holds that $\tau(x) = \E[Y(1) -Y(0) \mid X = x]$. 

\begin{lemma}\label{lem:sim}
Let $(X, Z, A, Y)$ be sampled from the the previously described procedure. Then, it holds that 
\begin{equation}
    \mu_i^A(x) = \E[A \mid Z=i, X = x] \quad \text{and} \quad \mu_i^Y(x) = \E[Y \mid Z=i, X = x].
\end{equation}
\end{lemma}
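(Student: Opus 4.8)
\textbf{Proof proposal for Lemma~\ref{lem:sim}.}

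The plan is to verify the two claimed conditional-expectation identities directly from the sampling scheme in Eq.~\eqref{eq:sim_treat} and Eq.~\eqref{eq:sim_outcome}, exploiting the fact that $X$, $U$, and the noise terms $\epsilon_A, \epsilon_Y$ are all mutually independent, and that $Z \mid X \sim \mathrm{Bernoulli}(\pi(X))$ is independent of $(U, \epsilon_A, \epsilon_Y)$ given $X$. First I would establish the treatment identity. Fix $Z = i$ and $X = x$. Then from Eq.~\eqref{eq:sim_treat}, $A = \mathbbm{1}\{U + \epsilon_A > \alpha_i(x)\}$, so $\E[A \mid Z = i, X = x] = \mathbb{P}(U + \epsilon_A > \alpha_i(x))$. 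Since $U \sim \mathcal{N}(0, 0.2^2)$ and $\epsilon_A \sim \mathcal{N}(0, 0.1^2)$ are independent of each other (and of $X$), we have $U + \epsilon_A \sim \mathcal{N}(0, 0.1^2 + 0.2^2)$. Writing $\sigma_A = \sqrt{0.1^2 + 0.2^2}$, the probability $\mathbb{P}(U + \epsilon_A > \alpha_i(x))$ equals $1 - \Phi(\alpha_i(x)/\sigma_A)$. Plugging in $\alpha_i(x) = \Phi^{-1}(1 - \mu_i^A(x))\,\sigma_A$ gives $\alpha_i(x)/\sigma_A = \Phi^{-1}(1 - \mu_i^A(x))$, hence $1 - \Phi(\Phi^{-1}(1 - \mu_i^A(x))) = 1 - (1 - \mu_i^A(x)) = \mu_i^A(x)$, as desired. (One should note the boundedness conditions — Assumption~\ref{ass:boundedness} and the construction $\mu_1^A = 0.3\,\sigma\circ f_1 + 0.7$, $\mu_0^A = 0.3\,\sigma\circ f_0$ — keep $\mu_i^A(x) \in (0,1)$ so that $\Phi^{-1}$ is well-defined.)

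Next I would handle the outcome identity. The key observation is that the coefficients of $A$ and $(1-A)$ in Eq.~\eqref{eq:sim_outcome} were reverse-engineered precisely so that $\E[Y \mid A=a, X=x]$ matches the right structural form; concretely, denote by $m_1(x)$ the coefficient multiplying $A$ and by $m_0(x)$ the coefficient multiplying $(1-A)$, so that $Y = A\,m_1(x) + (1-A)\,m_0(x) + \alpha_U U + \epsilon_Y$ given $X = x$. A short algebraic check shows $m_1(x) - m_0(x) = \tau(x) = \delta_Y(x)/\delta_A(x)$ and, more importantly, that these are arranged so that $\mu_1^A(x)\,m_1(x) + (1-\mu_1^A(x))\,m_0(x) = \mu_1^Y(x)$ and $\mu_0^A(x)\,m_1(x) + (1-\mu_0^A(x))\,m_0(x) = \mu_0^Y(x)$ — i.e., taking the $A$-mixture with weight $\mu_i^A(x)$ recovers $\mu_i^Y(x)$. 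Then, conditioning on $Z = i$ and $X = x$, and using $\E[U \mid Z=i, X=x] = \E[U] = 0$ and $\E[\epsilon_Y] = 0$ (both independent of $Z, X$), together with the tower rule over $A$,
\begin{equation*}
\E[Y \mid Z=i, X=x] = \E[A \mid Z=i,X=x]\,m_i(x) \text{-style mixture} = \mu_i^A(x)\,m_1(x) + (1 - \mu_i^A(x))\,m_0(x) = \mu_i^Y(x),
\end{equation*}
where I have used the already-established fact $\E[A \mid Z=i, X=x] = \mu_i^A(x)$ and the independence of $A$ from $(U, \epsilon_Y)$ given $(Z, X)$.

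The main obstacle — really the only nontrivial part — is the algebraic verification that the two mixture identities $\mu_i^A(x)\,m_1(x) + (1-\mu_i^A(x))\,m_0(x) = \mu_i^Y(x)$ hold for both $i = 0$ and $i = 1$, given the explicit rational expressions for $m_1(x)$ and $m_0(x)$ in Eq.~\eqref{eq:sim_outcome}. This is a routine but slightly tedious computation: one clears the common denominator $\delta_A(x) = \mu_1^A(x) - \mu_0^A(x)$, expands, and checks that the numerators collapse to $\mu_i^Y(x)\,\delta_A(x)$; the $i=1$ case uses the cancellation $\mu_1^A(x)(\mu_1^A(x)-1) - \mu_1^A(x)\mu_0^A(x) + \mu_1^A(x) + (\text{the }(1-A)\text{ numerator term}) = \mu_1^Y(x)\delta_A(x)$ and similarly for $i=0$. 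Everything else (independence bookkeeping, the Gaussian CDF manipulation, vanishing of the mean-zero noise and confounder terms) is immediate. I would also remark at the end that, combined with the definition $\tau = \delta_Y/\delta_A = (\mu_1^Y - \mu_0^Y)/(\mu_1^A - \mu_0^A)$ and the identification result Eq.~\eqref{eq:identification}, Lemma~\ref{lem:sim} confirms that the constructed $\tau(x)$ is genuinely the oracle CATE of the simulated data generating process.
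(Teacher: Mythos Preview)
Your approach is essentially identical to the paper's: the treatment identity via the Gaussian CDF computation $1 - \Phi(\Phi^{-1}(1-\mu_i^A(x))) = \mu_i^A(x)$, and the outcome identity by taking $\E[\cdot \mid Z=i, X=x]$ of Eq.~\eqref{eq:sim_outcome}, using $\E[A\mid Z=i,X=x]=\mu_i^A(x)$ and the vanishing of $\E[U\mid Z=i,X=x]$ and $\E[\epsilon_Y\mid Z=i,X=x]$, then collapsing the algebra to $\mu_i^Y(x)\,\delta_A(x)/\delta_A(x)$.

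One correction, though: your claim that $A$ is independent of $U$ given $(Z,X)$ is false --- by construction $A = \mathbbm{1}\{U + \epsilon_A > \alpha_Z(X)\}$, so $A$ depends on $U$ even after conditioning on $(Z,X)$. Fortunately you do not need this independence (nor the tower rule over $A$): since $Y = A\,m_1(X) + (1-A)\,m_0(X) + \alpha_U U + \epsilon_Y$ is \emph{additive} in $U$, plain linearity of conditional expectation already gives $\E[Y\mid Z=i,X=x] = \mu_i^A(x)\,m_1(x) + (1-\mu_i^A(x))\,m_0(x) + \alpha_U\,\E[U\mid Z=i,X=x] + 0$, and the $U$-term vanishes because $U \indep (Z,X)$. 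Drop the independence-of-$A$-and-$U$ remark and the ``tower rule over $A$'' phrasing, and your argument is clean and matches the paper's.
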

\begin{proof}
The first claim follows from
\begin{equation}
    \E[A \mid Z = i, X = x] = \mathbb{P}\left(U + \epsilon_{A} > \alpha_i(x)\right) 
    = 1 - \Phi(\Phi^{-1}(1 - \mu_i^A(x))) = \mu_i^A(x),
\end{equation}
because $U + \epsilon_A \sim \mathcal{N}(0, \sqrt{0.1^2 + 0.2^2})$.
The second claim follows from
\begin{align}
\E[Y \mid Z = i, X = x] &= \mu_i^A(x) \left( \frac{(\mu_1^A(x) - 1)\mu_0^Y(x) - \mu_0^A(x)\mu_1^Y(x) + \mu_1^Y(x)}{\delta_A(x)}\right) \\& \quad + (1-\mu_i^A(x)) \left( \frac{\mu_1^A(x)\mu_0^Y(x) - \mu_0^A(x)\mu_1^Y(x)}{\delta_A(x)} \right) \\ &= \frac{\mu_i^Y(x) \delta_A(x)}{\delta_A(x)} = \mu_i^Y(x).
\end{align}

\end{proof}

\clearpage

\section{Oregon health insurance experiment}\label{app:real}

The so-called \emph{Oregon health insurance experiment}\footnote{Data available here: https://www.nber.org/programs-projects/projects-and-centers/oregon-health-insurance-experiment} (OHIE) \citep{Finkelstein.2012} was an important RCT with non-compliance. It was intentionally conducted as large-scale effort among public health to assess the effect of health insurance on several outcomes such as health or economic status. In 2008, a lottery draw offered low-income, uninsured adults in Oregon participation in a Medicaid program, providing health insurance. Individuals whose names were drawn could decide to sign up for the program.

In our analysis, the lottery assignment is the instrument $Z$, the decision to sign up for the Medicaid program is the treatment $A$, and an overall health score is the outcome $Y$. The outcome was obtained after a period of 12 months during in-person interviews. We use the following covariates $X$: age, gender, language, the number of emergency visits before the experiment, and the number of people the individual signed up with. The latter is used to control for peer effects, and it is important to include this variable in our analysis as it is the only variable influencing the propensity score (see below). We extract $\sim$ 10,000 observations from the OHIE data and plot the histograms of all variables in Fig.~\ref{fig:hist}. We can clearly observe the presence of non-compliance within the data, because the ratio of treated / untreated individuals is much lower than the corresponding ratio for the treatment assignment.

\begin{figure}[h]
\centering
\includegraphics[width=0.8\linewidth]{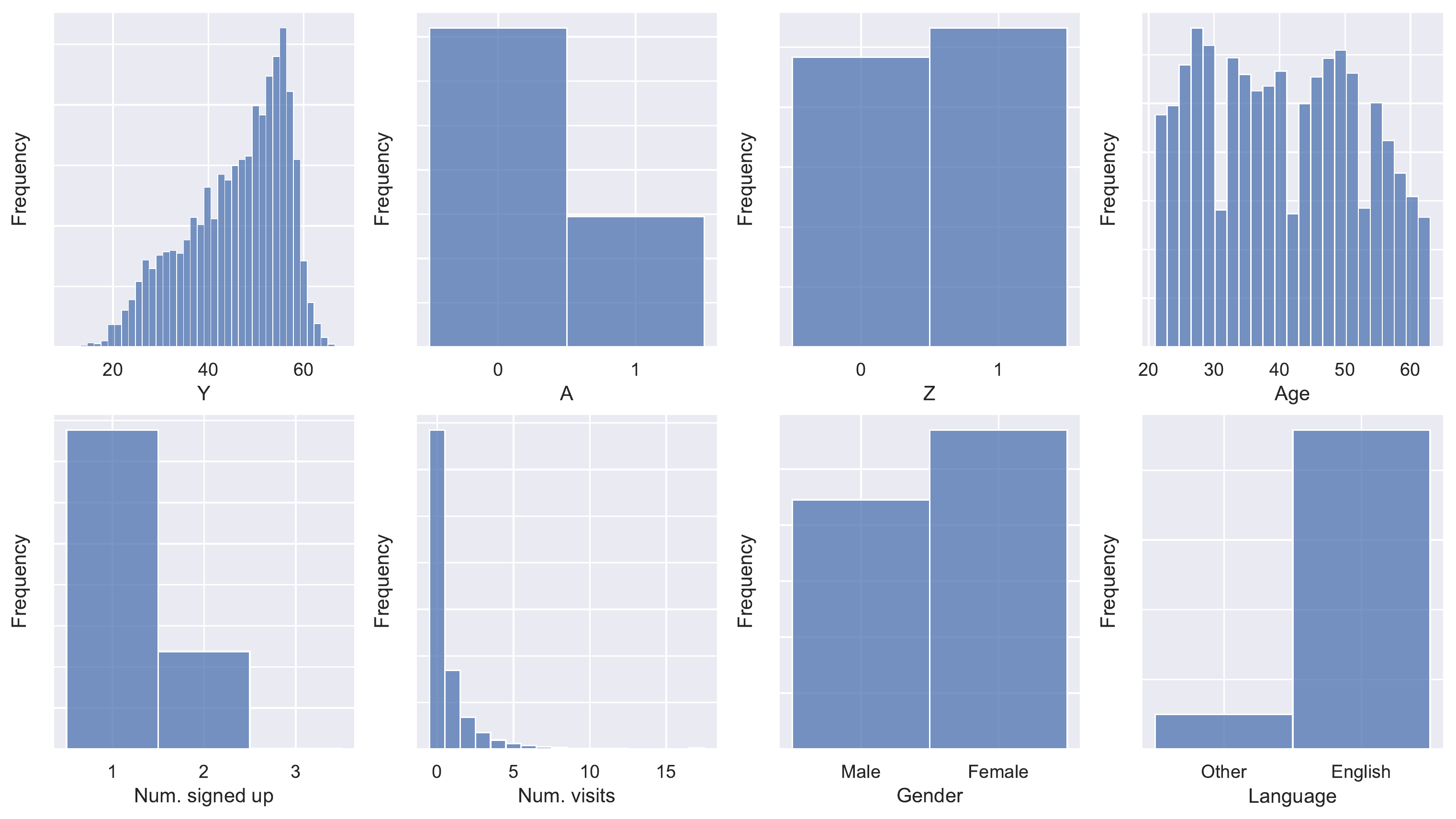}
\caption{Histograms of each variable in our sample from OHIE.}
\label{fig:hist}
\end{figure}

The data collection in the OHIE was done follows: After excluding individuals below the age of 19, above the age of 64, and individuals with residence outside of Oregon, 74,922 individuals were considered for the lottery. Among those, 29,834 were selected randomly and were offered participation in the program. However, the probability of selection depended on the number of household members on the waiting list: for instance, an individual who signed up with another person was twice as likely to be selected. From the 74,922 individuals, 57,528 signed up alone, 17,236 signed up with another person, and 158 signed up with two more people on the waiting list. Thus, the probability of being selected conditional on the number of household members on the waiting list follows the multivariate version of Wallenius' noncentral hypergeometric distribution \citep{Chesson.1976}.

\textbf{Propensity score:} We computed the propensity score as follows. To account for the Wallenius' noncentral hypergeometric distribution, we use the R package \emph{BiasedUrn} to calculate the propensity score $\pi(x) = \mathbb{P}(Z = 1 \mid X = x)$. We obtained 
\begin{equation}\label{eq:ohie_propensity}
    \pi(x) = \begin{cases} 
      0.345, & \text{if individual } x \text{ signed up alone,}\\
      0.571, & \text{if individual } x \text{ signed up with one more person,} \\
      0.719, & \text{if individual } x \text{ signed up with two more people.} 
   \end{cases}
\end{equation}
During the training of both \frameworkname and DRIV, we use the calculated values from Eq.~\eqref{eq:ohie_propensity} for the propensity score. 


\clearpage

\section{Details for baseline methods}\label{app:baseline}

In this section, we give a brief overview on the baselines which we used in our experiments. We implemented: (1)~ CATE methods for unconfoundedness: \textbf{TARNet} \citep{Shalit.2017} and TARNet combined with the \textbf{DR-learner} \citep{Kennedy.2022d}; (2)~general IV methods, i.e., IV methods developed for IV settings with multiple or continuous instruments and treatments: \textbf{2SLS} \citep{Wright.1928}, kernel IV (\textbf{KIV}) \citep{Singh.2019}, \textbf{DFIV} \citep{Xu.2021}, \textbf{DeepIV} \citep{Hartford.2017}, \textbf{DeepGMM} \citep{Bennett.2019}, \textbf{DMLIV} \citep{Syrgkanis.2019}, and DMLIV combined with \textbf{DRIV} (as described in \citep{Syrgkanis.2019}); (3)~the (plug-in) Wald estimator using \textbf{linear models} and Bayesian additive regression trees (\textbf{BART}) \citep{Chipman.2010}. Of note, the DR-learner assumes unconfoundedness, which is why we only combine it TARNet in our experiments. In the following, we provide details regarding methods and implementation.

\subsection{CATE methods for unconfoundedness}
Many CATE methods assume \emph{unconfoundedness}, i.e., that all confounders are observed in the data. Formally, the unconfoundedness assumption can be expressed in the potential outcomes framework as
\begin{equation}
    Y(1), Y(0) \indep A \mid X.
\end{equation}
Under unconfoundedness, the CATE is identified as
\begin{equation}\label{eq:unconfoundedness_identification}
    \tau(x) = \mu_1(x) - \mu_0(x) \quad \text{with} \quad \mu_i(x) = \E[Y \mid A = i, X = x].
\end{equation}
Methods that assume unconfoundedness proceed by estimating $\mu_i(x) = \E[Y \mid A = i, X = x]$ from Eq.~\eqref{eq:unconfoundedness_identification}. However, if unobserved confounders $U$ exist, it follows that
\begin{equation}
    \tau(x) = \E[Y \mid A = 1, X = x, U] - \E[Y \mid A = 0, X = x, U] \neq \mu_1(x) - \mu_0(x),
\end{equation}
which means that estimators that assume unconfoundedness are generally biased. Nevertheless, we include two baselines that assume unconfoundedness into our experiments: TARNet \citep{Shalit.2017} and the DR-learner \citep{Kennedy.2022d}.

\textbf{TARNet} \citep{Shalit.2017}:
TARNet \citep{Shalit.2017} is a neural network that estimates the CATE components $\mu_i(x)$ from Eq.~\ref{eq:unconfoundedness_identification} by learning a shared representation $\Phi(x)$ and two potential outcome heads $h_i(\Phi(x))$. We train TARNet by minimizing the loss
\begin{equation}
    \mathcal{L}(\theta) = \sum_{i=1}^n L\left(h_{a_i}(\Phi(x_i, \theta_\Phi), \theta_{h_i}), y_i \right),
\end{equation}
where $\theta = (\theta_{h_1}, \theta_{h_0}, \theta_\Phi)$ denotes the model parameters and $L$ denotes squared loss if $Y$ is continuous or binary cross entropy loss if $Y$ is binary.

\emph{Note regarding balanced representations:} In \citep{Shalit.2017}, the authors propose to add an additional regularization term inspired from domain adaptation literature, which forces TARNet to learn a balanced representation $\Phi(x)$, i.e., that minimizes the distance the treatment and control group in the feature space. They showed that this approach leads to minimization of a generalization bound on the CATE estimation error if the representation is invertible. 

In our experiments, we refrained from learning balanced representations because minimizing the regularized loss from \citep{Shalit.2017} does not necessarily result in an invertible representation and thus may even harm the estimation performance. For a detailed discussion, we refer to \citep{Curth.2021}. Furthermore, by leaving out the regularization, we ensure comparability between the different baselines. If balanced representations are desired, the balanced representation approach could also be extended to \modelname, as we also build \modelname on learning shared representations.

\textbf{DR-learner} \citep{Kennedy.2022d}: The DR-learner \citep{Kennedy.2022d} is a meta learner that takes arbitrary estimators of the CATE componenets $\mu_i$ and the propensity score $\pi(x) = \mathbb{P}(A = 1 \mid X = x)$ as input and performs a pseudo outcome regression by using the pseudo outcome
\begin{equation}
    \hat{Y}_{\mathrm{MR}} = \left(\frac{A}{\hat{\pi}(X)} - \frac{1-A}{1-\hat{\pi}(X)}\right)Y + \left(1 - \frac{A}{\hat{\pi}(X)}\right)\hat{\mu}_1(X) - \left(1 - \frac{1 - A}{1 - \hat{\pi}(X)} \right)\hat{\mu}_0(X) .
\end{equation}
In our experiments, we use TARNet as base method to provide initial estimators $\hat{\mu}_i(X)$. We further learn propensity score estimates $\hat{\pi}(X)$ by adding a seperate representation to TARNet as done in \citep{Shalit.2017}.

\subsection{General IV methods}

\textbf{2SLS} \citep{Wright.1928}: 2SLS \citep{Wright.1928} is a linear two-stage approach. First, the treatments $A$ are regressed on the instruments $Z$ and fitted values $\hat{A}$ are obtained. In the second stage, the outcome $Y$ is regressed on $\hat{A}$. We implement 2SLS using the scikit-learn package. 
 
\textbf{KIV} \citep{Singh.2019}: Kernel~IV \citep{Singh.2019} generalizes 2SLS to nonlinear settings. KIV assumes that the data is generated by
\begin{equation}\label{eq:general_iv_ass}
    Y = f(A) + U,
\end{equation}
where $U$ is an additive unobserved confounder and $f$ is some unknown (potentially nonlinear) structural function. KIV then models the structural function via
\begin{equation}\label{eq:two_stage_features}
    f(a) = \mu^t \psi(a) \quad \text{and} \quad \E[\psi(A) \mid Z = z] = V \phi(z),
\end{equation}
where $\psi$ and$\phi$ are feature maps. Here, kernel ridge regressions instead of linear regressions are used in both stages to estimate $\mu$ and $V$.

Following \citep{Singh.2019} we use the exponential kernel \citep{Rasmussen.2008} and set the length scale to the median inter-point distance. KIV does not provide a direct way to incorporate the observed confounders $X$. Hence, we augment both the instrument and the treatment with $X$, which is consistent with previous work \citep{Bennett.2019, Xu.2021}. We also use two different samples for each stage as recommended in \citep{Singh.2019}.

\textbf{DFIV} \citep{Xu.2021}: DFIV \citep{Xu.2021} is a similar approach KIV in generalizing 2SLS to nonlinear setting by assuming Eq.~\eqref{eq:general_iv_ass} and Eq.~\eqref{eq:two_stage_features}. However, instead of using kernel methods, DFIV models the features maps $\psi_{\theta_A}$ and $\phi_{\theta_Z}$ as neural networks with parameters $\theta_A$ and $\theta_Z$, respectively. DFIV is trained by iteratively updating the parameters $\theta_A$ and $\theta_Z$. The authors also provide a training algorithm that incorporates observed confounders $X$, which we implemented for our experiments. During training, we used two different datasets for each of the two stages as described in in the paper.

\textbf{DeepIV} \citep{Hartford.2017}: DeepIV \citep{Hartford.2017} also assumes additive unobserved confounding as in Eq.~\eqref{eq:general_iv_ass}, but leverages the identification result \citep{Newey.2003}
\begin{equation}\label{eq:deepiv_inverse}
    \E[Y \mid X=x, Z=z] = \int h(a,x) \; \mathrm{d}F(a \mid  x, z),
\end{equation}
where $h(a,x) = f(a, x) + \E[U \mid X=x]$ is the target counterfactual prediction function. DeepIV estimates $F(a \mid x, z)$, i.e., the conditional distribution function of the treatment $A$ given observed covariates $X$ and instruments $Z$, by using neural networks. Because we consider only binary treatments, we simply implement a (tunable) feed-forward neural network with sigmoid activation function. Then, DeepIV proceeds by learning a second stage neural network to solve the inverse problem defined by Eq.~\eqref{eq:deepiv_inverse}.

\textbf{DeepGMM} \citep{Bennett.2019}: DeepGMM \citep{Bennett.2019} adopts neural networks for IV estimation inspired by the (optimally weighted) Generalized Method of Moments.
The DeepGMM estimator is defined as the solution of the following minimax game:
\begin{equation}
    \hat{\theta} \in \underset{\theta \in \Theta}{\arg \min} \; \underset{\tau \in \mathrm{T}}{\sup} \; \frac{1}{n} \sum_{i=1}^n f(z_i, \tau) (y_i - g(a_i, \theta)) - \frac{1}{4n} \sum_{i=1}^n f^2(z_i, \tau) (y_i - g(a_i, \widetilde{\theta}))^2,
\end{equation}
where $f(z_i, \cdot)$ and $g(a_i, \cdot)$ are parameterized by neural networks. As recommended in \citep{Bennett.2019}, we solve this optimization via adversarial training with the Optimistic Adam optimizer \citep{Daskalakis.2018}, where we set the parameter $\widetilde{\theta}$ to the previous value of $\theta$.

\textbf{DMLIV} \citep{Syrgkanis.2019}: DMLIV \citep{Syrgkanis.2019} assumes that the data is generated via
\begin{equation}
    Y = \tau(X) A + f(X) + U,
\end{equation}
where $\tau$ is the CATE $f$ some function of the observed covariates. First, DMLIV estimates the functions $q(X) = \E[Y \mid X]$, $h(Z, X) = \E[A \mid Z, X]$, and $p(X) = \E[A \mid X]$. Then, the CATE is learned by minimizing the loss 
\begin{equation}
    \mathcal{L}(\theta) = \sum_{i=1}  (y_i - \hat{q}(x_i) - \hat{\tau}(x_i, \theta) (\hat{h}(z_i, x_i) - \hat{p}(x_i))^2,
\end{equation}
where $\hat{\tau}(X, \cdot)$ is some model for $\tau(X)$. In our experiments, we use (tunable) feed-forward neural networks for all estimators.

\textbf{DRIV} \citep{Syrgkanis.2019}: DRIV \citep{Syrgkanis.2019} is a meta learner, originally proposed in combination with DMLIV. It requires initial estimators for $q(X)$, $p(X)$, $\pi(X) = \E[Z \mid X = x]$, and $f(X) = \E[A \cdot Z \mid X = x]$ as well as an initial CATE estimatior $\hat{\tau}_{\mathrm{init}}(X)$ (e.g., from DMLIV). The CATE is then estimated by a pseudo regression on the following doubly robust pseudo outcome:
\begin{equation}
    \hat{Y}_{\mathrm{DR}} = \hat{\tau}_{\mathrm{init}}(X) + \frac{\left(Y - \hat{q}(X) - \hat{\tau}_{\mathrm{init}}(X) (A - \hat{p}(X)) Z - \hat{\pi}(X)\right)}{\hat{f}(X) - \hat{p}(X)\hat{r}(X)}.
\end{equation}
We implement all regressions using (tunable) feed-forward neural networks. 

\underline{Comparison between DRIV vs. \frameworkname:} There are two key differences between our paper and \citep{Syrgkanis.2019}: (i) In contrast to DRIV, we showed that our MRIV is multiply robust. (ii) We derive a multiple robust convergence rate, while the rate in \citep{Syrgkanis.2019} is not robust with respect to the nuisance rates.

Ad (i): Both MRIV and DRIV perform a pseudo-outcome regression on the efficient influence function (EIF) of the ATE. The key difference: DRIV uses the doubly robust parametrization of the EIF from \citep{Okui.2012}, whereas we use the multiply robust parametrization of the EIF from \citep{Wang.2018}\footnote{For a detailed discussion on multiple robustness and the importance of the EIF parametrization, we refer to \citep{Wang.2019}, Section 4.5.}. Hence, our MRIV frameworks extends DRIV in a non-trivial way to achieve multiple robustness. Thus, our estimator is consistent in the union of \emph{three} different model specifications.\footnote{On a related note, a similar, important contribution of developing multiply robust method was recently made for the average treatment effect. Here, the estimator of \citep{Okui.2012} was extended by the estimator of \citep{Wang.2018} to allow for multi robustness. Yet, this different from our work in that it focuses on the average treatment effect, while we study the conditional average treatment effect in our paper.}

Ad (ii): Here, we compare the convergence rates from DRIV and our \frameworkname and, thereby, show the strengths of our \frameworkname. To this end, let us assume that the pseudo regression function is $\gamma$-smooth and that we use the same second-stage estimator $\hat{\mathrm{E}}_n$ with minimax rate $n^{-\frac{2\gamma}{2\gamma + p}}$ for both DRIV and \frameworkname. If the nuisance parameters $q(X)$, $p(X)$, $f(X)$, and $\pi(X)$ are $\alpha$-smooth and further are estimated with minimax rate $n^{\frac{-2\alpha}{2\alpha+p}}$, Corollary 4 from \citep{Syrgkanis.2019} states that DRIV converges with rate 
\begin{equation*}
   \E\left[\left(\hat{\tau}_{\mathrm{DRIV}}(x) - \tau(x)\right)^2\right]  \lesssim  n^{\frac{-2\gamma}{2\gamma+p}} + n^{\frac{-4\alpha}{2\alpha+p}}.
\end{equation*}
In contrast, \frameworkname assumes estimation of the nuisance parameters $\mu_0^Y(x)$ with rate $n^{\frac{-2\alpha}{2\alpha+p}}$, $\mu_0^A(x)$ and $\delta_A(x)$ with rate $n^{\frac{-2\beta}{2\beta+p}}$, and $\pi(x)$ with rate $n^{\frac{-2\delta}{2\delta+p}}$. If the initial estimator $\hat{\tau}_{\mathrm{init}}(x)$ converges with rate $r_{\tau}(n)$, our Theorem \ref{thrm:upperbound} yields the rate
\begin{equation*}
\begin{split}
    \E\left[\left(\hat{\tau}_{\mathrm{MRIV}}(x) - \tau(x)\right)^2\right] & \lesssim   n^{\frac{-2\gamma}{2\gamma+p}} + r_{\tau}(n) \left(n^{\frac{-2\beta}{2\beta+p}} + n^{\frac{-2\delta}{2\delta+p}}  \right) + 
    n^{-2\left(\frac{\alpha}{2\alpha+p}+\frac{\delta}{2\delta+p}\right)} +
    n^{-2\left(\frac{\beta}{2\beta+p}+\frac{\delta}{2\delta+p}\right)} .
\end{split}
\end{equation*}
If all nuisance parameters converge with the same minimax rate of $n^{\frac{-2\alpha}{2\alpha+p}}$, the rates of DRIV and our \frameworkname coincide. However, different to DRIV, our rate is additionally \underline{multiple robust} in spirit of Theorem~\ref{thrm:robustness}. This presents a crucial strength of our \frameworkname over DRIV: For example, if $\delta$ is small (slow convergence of $\hat{\pi}(x)$), our \frameworkname still with fast rate as long as $\alpha$ and $\beta$ are large (i.e., if the other nuisance parameters are sufficiently smooth).

\subsection{Wald estimator}
Finally, we consider the Wald estimator \citep{Wald.1940} for the binary IV setting. More precisely, we estimate the CATE components $\mu_i^Y(x)$ and $\mu_i^A(x)$ seperately and plug them into
\begin{equation}
    \tau(x) = \frac{\hat{\mu}_1^Y(x) - \hat{\mu}_0^Y(x)}{\hat{\mu}_1^A(x) - \hat{\mu}_0^A(x)}.
\end{equation}
We consider two versions of the Wald estimator:

\textbf{Linear:} We use linear regressions to estimate the $\mu_i^Y(x)$ and logistic regressions to estimate the $\mu_i^A(x)$.

\textbf{BART:} We use Bayesian additive regression trees \citep{Chipman.2010} trees to estimate the $\mu_i^Y(x)$ and random forest classifier to estimate the $\mu_i^A(x)$.

\clearpage

\section{Visualization of predicted CATEs }

We plot the predicted CATEs for the different baselines and \modelname in Fig.~\ref{fig:model_fits} (for $n = 3000$). As expected, the linear methods (2SLS and linear Wald) are not flexible enough to provide accurate CATE estimates. We also observe that the curve of \modelname without \frameworkname is quite wiggly, i.e., the estimator has a relatively large variance. This variance is reduced when the full \modelname is applied. As a result, curve is much smoother. This is reasonable because \frameworkname does not estimate the CATE components individually, but estimates the CATE directly via the Stage~2 pseudo outcome regression. Overall, this confirms the superiority of our proposed framework. 

\begin{figure}[h]
\centering
\includegraphics[width=1\linewidth]{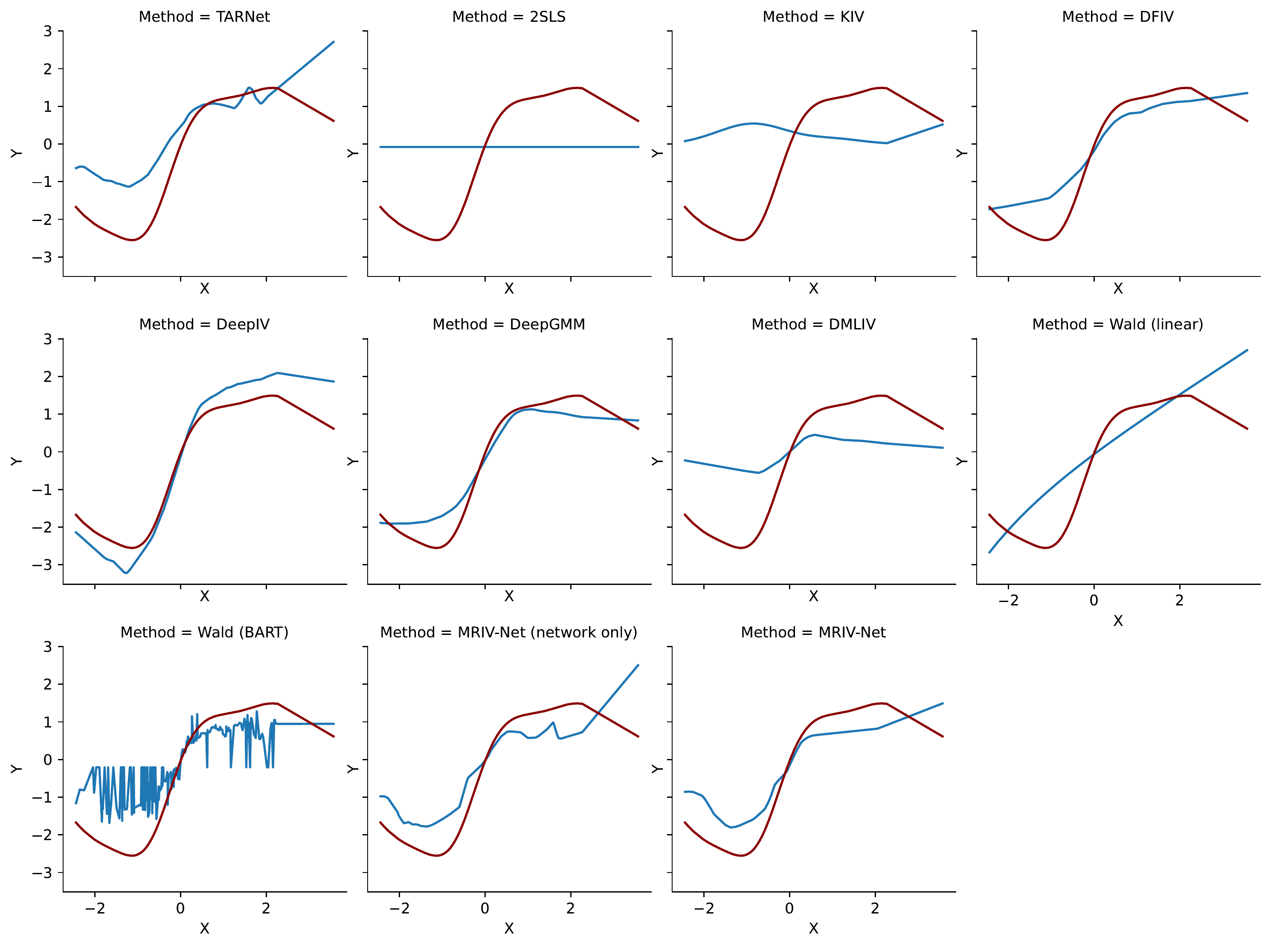}
\caption{Predicted CATEs (blue) and oracle CATE (red) for different baselines.}
\label{fig:model_fits}
\end{figure}

\clearpage

\section{Implementation details and hyperparameter tuning}\label{app:hyper}

\textbf{Implementation details for deep learning models:}
To make the performance of the deep learning models comparable, we implemented all feed-forward neural networks (including \modelname) as follows: We use two hidden layers with RELU activation functions. We also incorporated a dropout layer for each hidden layer. We trained all models with the Adam optimizer \citep{Kingma.2015} using 100 epochs. Exceptions are only DFIV and DeepGMM, where we used 200 epochs for training, accounting for slower convergence of the respective (adversarial) training algorithms. For DeepGMM, we further used Optimistic Adam \citep{Daskalakis.2018} as in the original paper.

\textbf{Training times:} We report the approximate times needed to train the deep learning models on our simulated data with $n=5000$ in Table~\ref{t:times}. For training, we used an AMD Ryzen Pro 7 CPU. Compared to DMLIV and DRIV, the training of \modelname is faster because only a single neural network is trained.

\begin{table}[h]
\caption{Training times for deep learning models (in seconds).}
\centering
\label{t:times}
\resizebox{\columnwidth}{!}{%
\begin{tabular}{cccccccc}
\noalign{\smallskip} \toprule \noalign{\smallskip}
TARNet & TARNet + DR & DFIV & DeepIV & DeepGMM & DMLIV & DMLIV + DRIV & \modelname \\
\midrule
$\sim$10.62 & $\sim$28.57 &$\sim$164.98&$\sim$30.21& $\sim$17.31 & $\sim$74.98& $\sim$91.12&$\sim$32.20 \\
\bottomrule
\end{tabular}}
\end{table}

\textbf{Hyperparameter tuning:} We performed hyperparameter tuning for all deep learning models (including \modelname), KIV, and the BART Wald estimator on all datasets. For all methods except KIV and DFIV, we split the data into a training set (80\%) and a validation set (20\%). We then performed 40 random grid search iterations and chose the set of parameters that minimized the respective training loss on the validation set. In particular, the tuning procedure was the same for all baselines, which ensures that the performance gain of \modelname is due to the method itself and not due to larger flexibility. Exceptions are only KIV and DFIV, for which we implemented the customized hyperparameter tuning algorithms proposed in \citep{Singh.2019} and \citep{Xu.2021} to ensure consistency with prior literature. For the meta learners (DR-learner, DRIV, and \frameworkname), we first performed hyperparameter tuning for the base methods and nuisance models, before tuning the pseudo outcome regression neural network by using the input from the tuned models. The tuning ranges for the hyperparameter are shown in Table~\ref{t:hyper}. These include both the hyperparameter rangers shared across all neural networks and the model-specific hyperparameters. For reproducibility purposes, we publish the selected
hyperparameters in our GitHub project as \emph{.yaml} files.

\begin{table}[h]
\caption{Hyperparameter tuning ranges.}
\centering
\label{t:hyper}
\resizebox{\columnwidth}{!}{%
\begin{tabular}{lll}
\noalign{\smallskip} \toprule \noalign{\smallskip}
\textsc{Model} & \textsc{Hyperparameter} & \textsc{Tuning range} \\
\midrule
\midrule
Feed-forward neural networks&Hidden layer size(es) & $p$, $5p$, $10p$, $20p$, $30p$ (simulated data) \\
(Shared parameter ranges&& $p$, $3p$, $5p$, $8p$, $10p$ (OHIE) \\
for all deep learning baselines)&Learning rate & $0.0001$, $0.0005$, $0.001$, $0.005$, $0.01$\\
&Batch size &$64$, $128$, $256$ \\
&Dropout probability & $0$, $0.1$, $0.2$, $0.3$\\
\midrule
KIV & $\lambda$ (Ridge penalty first stage) & 5, 6, 7, 8, 9, 10, 12\\
 & $\xi$ (Ridge penalty second stage) & 5, 6, 7, 8, 9, 10, 12\\
DFIV& $\lambda_1$ (Ridge penalty first stage) & 0.0001, 0.001, 0.01, 0.1 (simulated data) \\
 &  & 0.01, 0.05, 0.1 (OHIE) \\
& $\lambda_2$ (Ridge penalty second stage) & 0.0001, 0.001, 0.01, 0.1 (simulated data) \\
 &  & 0.01, 0.05, 0.1 (OHIE) \\
DeepGMM & $\lambda_f$ (learning rate multiplier) &0.5, 1, 1.5, 2, 5\\
Wald (BART) & Number of trees (BART) & 20, 30, 40, 50\\
 & Number of trees (Random forest classifier) & 20, 30, 40, 50\\
\bottomrule
\multicolumn{3}{l}{$p =$ network input size}
\end{tabular}}
\end{table}

\textbf{Hyperparameter robustness checks:}
We also investigate the robustness of \modelname with respect to hyperparameter choice. To to this, we fix the optimal hyperparameter constellation for our simulated data for $n=3000$ and perturb the hidden layer sizes, learning rate, dropout probability, and batch size. The results are shown in Fig.~\ref{fig:hyper_robust}. We observe that the RMSE only changes marginally when perturbing the different hyperparameters, indicating that our method is to a certain degree robust against hyperparameter misspecification. Furthermore, our results indicate that the performance improvement of \modelname over the baselines observed in our experiments is not due to hyperparameter tuning, but to our method itself.

\begin{figure}[h]
\centering
\includegraphics[width=0.6\linewidth]{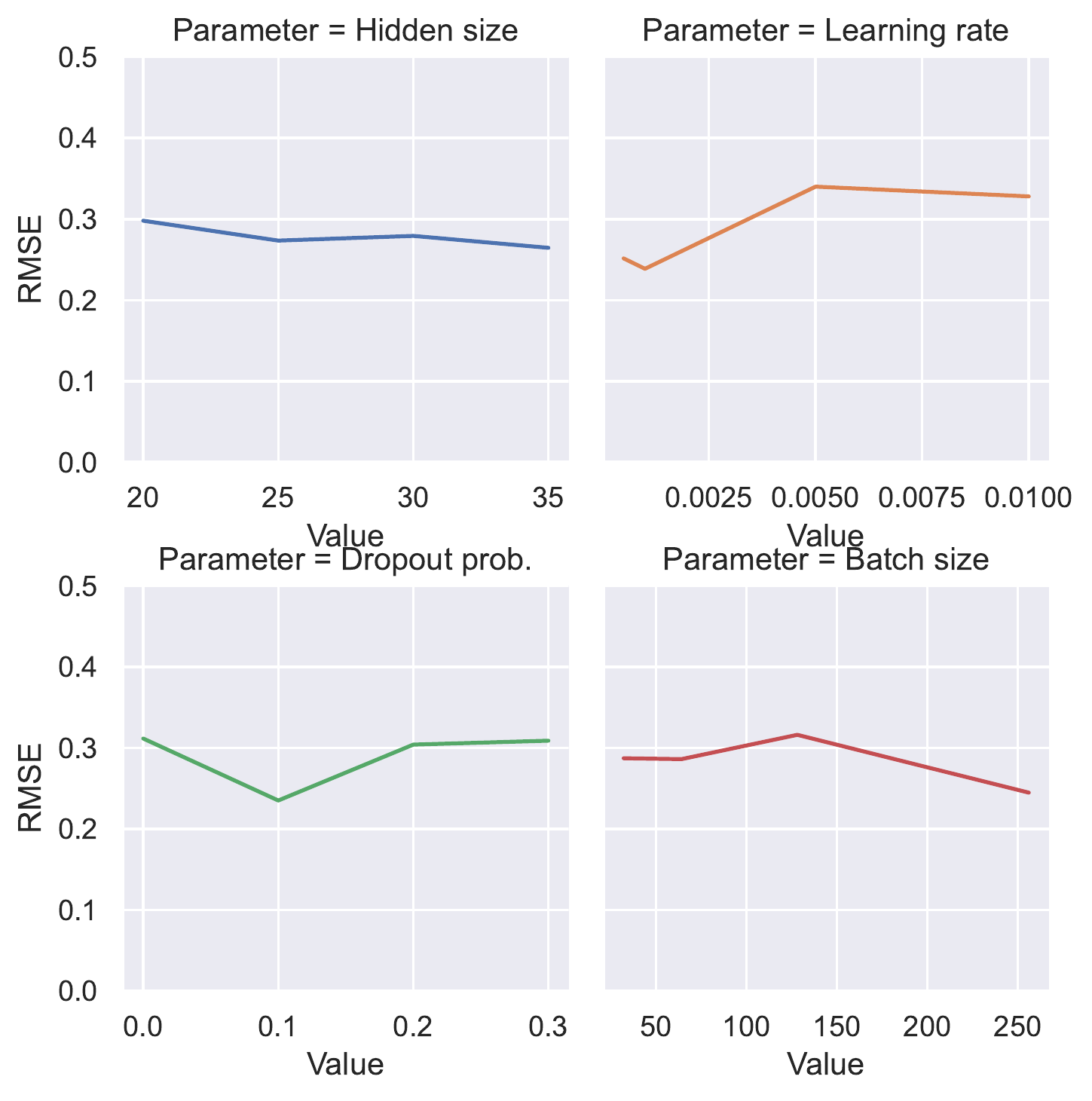}
\caption{Robustness checks for different hyperparameters of \modelname.}
\label{fig:hyper_robust}
\end{figure}

\clearpage

\section{Results for semi-synthetic data}\label{app:results}

In the main paper, we evaluated \modelname both on synthetic and real-world data. Here, we provide additional results by constructing a semi-synthetic dataset on the basis of OHIE. It is common practice in causal inference literature to use semi-synthetic data for evaluation, because it combines advantages of both synthetic and real-world data. On the one hand, the real-world data part ensures that the data distribution is realistic and matches those in practice. On the other hand, the counterfactual ground-truth is still available, which makes it possible to measure the performance of CATE methods.

We construct our semi-synthetic data as follows: First, we extract the covariates $X \in \R^5$ and instruments $Z \in \{0,1\}$ of our OHIE dataset from Sec.~\ref{app:real}. Then, we construct the treatment components $\mu_i^A(x)$ via
\begin{equation}
    \mu_1^A(X) = 0.3 \cdot \sigma(X_1) + 0.7 \quad \text{ and } \mu_0^A(X) = 0.3 \cdot \sigma(X_1) ,
\end{equation}
where $X_1$ is the (standardized) age and $\sigma(\cdot)$ is the sigmoid function. The outcome components are constructed via
\begin{equation}
    \mu_1^Y(X) = 0.5 X_1^2 + \sum_{i=2}^5 X_i^2 \quad \text{ and } \mu_0^Y(X) = - 0.5 X_1^2 + \sum_{i=2}^5 X_i^2.
\end{equation}
We then sample treatments $A$ and outcomes $Y$ as in Eq.~\eqref{eq:sim_treat} and Eq.~\eqref{eq:sim_outcome}. Lemma~\ref{lem:sim} ensures that $\mu_i^Y(X) = \E[Y \mid Z = i, X]$ and $\mu_i^A(X) = \E[A \mid Z = i, X]$.

Given the above, the oracle CATE becomes
\begin{equation}
    \tau(X) = \frac{X_1^2}{0.7} .
\end{equation}
Note that $\tau(X)$ is sparse in the sense that it only depends on age, while the outcome components depend on all five covariates. Following our theoretical analysis in Sec.~\ref{app:sparsity}, \modelname should thus outperform methods that aim at estimating the components directly. This is confirmed in Table~\ref{tab:semi_synth}, where we show the results for all baselines and \modelname on the semi-synthetic data. Indeed, we observe that \modelname outperforms all other baselines, confirming both the superiority of our method as well as our theoretical results under sparsity assumptions from Sec.~\ref{app:sparsity}.

\begin{table}[h]
\caption{Results for semi-synthetic data.}
\label{tab:semi_synth}
\centering
\resizebox{\columnwidth}{!}{%
\begin{tabular}{lccc}
\toprule
{Method} & {$n = 3000$} & {$n = 5000$} & {$n = 8000$} \\
\midrule
\textsc{(1) Standard ITE} & &  & \\
\quad TARNet \citep{Shalit.2017} &$1.66 \pm 0.11$& $1.58 \pm 0.07$ & $1.57 \pm 0.11$\\
\quad TARNet + DR \citep{Shalit.2017, Kennedy.2022d} &$1.31 \pm 0.28$& $1.22 \pm 0.37$ & $1.12 \pm 0.15$\\
\midrule
\textsc{(2) General IV} & &  & \\
\quad 2SLS \citep{Wooldridge.2013}&$1.34 \pm 0.06$ &$1.31 \pm 0.03$  & $1.32 \pm 0.02$ \\
\quad KIV \citep{Singh.2019}&$1.97 \pm 0.10$ & $1.92 \pm 0.05$ &  $1.93 \pm 0.05$\\
\quad DFIV \citep{Xu.2021}&$1.67 \pm 0.44$ & $1.63 \pm 0.47$ & $1.45 \pm 0.17$ \\
\quad DeepIV \citep{Hartford.2017}&$1.24 \pm 0.26$ & $0.99 \pm 0.22$ & $0.84 \pm 0.19$\\
\quad DeepGMM \citep{Bennett.2019}&$1.39 \pm 0.03$ &$1.37 \pm 0.16$  & $1.18 \pm 0.16$ \\
\quad DMLIV \citep{Syrgkanis.2019}&$2.12 \pm 0.10$ & $2.09 \pm 0.09$ &  $2.02 \pm 0.11$\\
\quad DMLIV + DRIV \citep{Syrgkanis.2019}&$1.22 \pm 0.10$ & $1.18 \pm 0.19$ &  $1.00 \pm 0.08$\\
\midrule
\textsc{(3) Wald estimator \citep{Wald.1940}} & &  & \\
\quad Linear &$1.42 \pm 0.24$ & $1.28 \pm 0.07$ &  $1.32 \pm 0.07$\\
\quad BART &$1.48 \pm 0.24$ &$1.29 \pm 0.04$ &$1.06 \pm 0.13$\\ \bottomrule \noalign{\smallskip}
\modelname (network only) &$1.11 \pm 0.15$ & $0.84 \pm 0.14$ & $0.95 \pm 0.21$\\
\modelname (ours) &$\boldsymbol{0.71 \pm 0.24}$ & $\boldsymbol{0.75 \pm 0.18}$ & $\boldsymbol{0.78 \pm 0.26}$\\

\bottomrule
\multicolumn{4}{l}{Reported: RMSE (mean $\pm$ standard deviation). Lower $=$ better (best in bold)}
\end{tabular}}

\vspace{-0.5cm}
\end{table}

\clearpage

\section{Results for cross-fitting}\label{app:results_cf}

Here, we repeat our experiments from the main paper but now make use of \emph{cross-fitting}. Recall that, in Theorem~\ref{thrm:upperbound}, we assume that the nuisance parameter estimation and the pseudo-outcome regression are performed on three independent samples. We now address this through \emph{cross-fitting}. To this end, our aim is to show that our proposed \frameworkname framework is again superior. 

For \frameworkname, we proceeded as follows: We split the sample $\mathcal{D}$ into three equally sized samples $\mathcal{D}_1$, $\mathcal{D}_2$, and $\mathcal{D}_3$. We then trained $\hat{\tau}_{init}(x)$, $\hat{\mu}_0^Y(x)$, and $\hat{\mu}_0^A(x)$ on $\mathcal{D}_1$, $\hat{\delta}_A(x)$ and $\hat{\pi}(x)$ on $\mathcal{D}_2$, and performed the pseudo-outcome regression on $\mathcal{D}_3$. Then, we repeated the same training procedure two times, but performed the pseudo-outcome regression on $\mathcal{D}_2$ and $\mathcal{D}_1$. Finally, we averaged the resulting three CATE estimators. For DRIV, we implemented the cross-fitting procedure described in \citep{Syrgkanis.2019}. For the DR-learner, we followed \citep{Kennedy.2022d}. 

The results are in Table~\ref{tab:results_sim_cf}. Importantly, the results confirm the effectiveness of our proposed \frameworkname. Overall, we find that our proposed \frameworkname outperforms DRIV for the vast majority of base methods when performing cross-fitting. Furthermore, \modelname is highly competitive even when comparing it with the cross-fitted estimators. This shows that our heuristic to learn separate representations instead of performing sample splits works in practice. In sum, the results confirm empirically that our \frameworkname is superior. 

\vspace{0.5cm}

\begin{table}[h]
\caption{Results for base methods with different meta-learners (i.e., DRIV, and our \frameworkname) using cross-fitting and results for MRIV-Net without cross-fitting.}
\label{tab:results_sim_cf}
\centering
\resizebox{\columnwidth}{!}{%
\begin{tabular}{lcccccc}
\noalign{\smallskip} \toprule \noalign{\smallskip}
& \multicolumn{2}{c}{$n = 3000$} & \multicolumn{2}{c}{$n = 5000$} & \multicolumn{2}{c}{$n = 8000$} \\
\cmidrule(lr){2-3} \cmidrule(lr){4-5} \cmidrule(lr){6-7}
\backslashbox{Base methods}{Meta-learners} & DRIV & \frameworkname (ours) & DRIV & \frameworkname (ours) & DRIV & \frameworkname (ours)\\
\midrule 
\textsc{(1) Standard ITE} & & & \\
\quad TARNet \citep{Shalit.2017} &$\boldsymbol{0.30 \pm 0.02}$ & $0.36 \pm 0.16$ & $0.18 \pm 0.06$ &$\boldsymbol{0.16 \pm 0.03}$ & $0.21 \pm 0.08$ & $\boldsymbol{0.13 \pm 0.04}$\\
\quad TARNet + DR-learner \citep{Shalit.2017, Kennedy.2022d} &\multicolumn{2}{c}{$0.85 \pm 0.11$} & \multicolumn{2}{c}{$0.66 \pm 0.08$} & \multicolumn{2}{c}{$0.67 \pm 0.12$}\\
\midrule
\textsc{(2) General IV} & & & \\
\quad 2SLS \citep{Wooldridge.2013}&$0.42 \pm 0.11$ & $\boldsymbol{0.33 \pm 0.09}$ & $\boldsymbol{0.20 \pm 0.07}$ & $0.23 \pm 0.11$ & $0.24 \pm 0.10$& $\boldsymbol{0.14 \pm 0.02}$\\
\quad KIV \citep{Singh.2019}& $0.47 \pm 0.18$& $\boldsymbol{0.45 \pm 0.15}$ & $0.20 \pm 0.06$ & $\boldsymbol{0.19 \pm 0.08}$& $0.22 \pm 0.04$& $\boldsymbol{0.15 \pm 0.03}$\\
\quad DFIV \citep{Xu.2021}&$0.35 \pm 0.05$ & $\boldsymbol{0.28 \pm 0.09}$ & $0.22 \pm 0.10$ &$\boldsymbol{0.18 \pm 0.08}$ & $0.24 \pm 0.12$ & $\boldsymbol{0.16 \pm 0.04}$ \\
\quad DeepIV \citep{Hartford.2017}& $\boldsymbol{0.38 \pm 0.09}$& $0.44 \pm 0.16$ & $0.20 \pm 0.07$ & $\boldsymbol{0.19 \pm 0.07}$ & $0.20 \pm 0.08$ & $\boldsymbol{0.12 \pm 0.02}$\\
\quad DeepGMM \citep{Bennett.2019}&$\boldsymbol{0.42 \pm 0.09}$ & $\boldsymbol{0.42 \pm 0.16}$ & $\boldsymbol{0.19 \pm 0.04}$ &$\boldsymbol{0.19 \pm 0.07}$ & $0.22 \pm 0.06$ & $\boldsymbol{0.13 \pm 0.02}$ \\
\quad DMLIV \citep{Syrgkanis.2019} &$\boldsymbol{0.44 \pm 0.09}$ & $0.46 \pm 0.16$ & $0.21 \pm 0.04$ &$\boldsymbol{0.19 \pm 0.07}$ & $0.21 \pm 0.05$& $\boldsymbol{0.14 \pm 0.02}$\\
\midrule
\textsc{(3) Wald estimator \citep{Wald.1940}}& & & & & & \\
\quad Linear &$0.47 \pm 0.23$ & $\boldsymbol{0.36 \pm 0.12}$ & $0.24 \pm 0.05$ & $\boldsymbol{0.20 \pm 0.08}$ & $0.22 \pm 0.05$ & $\boldsymbol{0.15 \pm 0.02}$\\
\quad BART &$0.43 \pm 0.12$ & $\boldsymbol{0.39 \pm 0.12}$ & $0.14 \pm 0.05$ &$\boldsymbol{0.13 \pm 0.05}$ & $0.23 \pm 0.08$ & $\boldsymbol{0.15 \pm 0.02}$\\
\midrule
\modelname {\textbackslash}w network only (ours)&$0.35 \pm 0.12$ & $\boldsymbol{0.26 \pm 0.11}$& $0.19 \pm 0.13$ & $\boldsymbol{0.15 \pm 0.03}$& $0.18 \pm 0.08$& $\boldsymbol{0.13 \pm 0.03}$\\

\bottomrule
\multicolumn{7}{l}{Reported: RMSE (mean $\pm$ standard deviation). Lower $=$ better (best in bold)}
\end{tabular}}%

\end{table}

\clearpage

\section{Further experimental results on real-world data}\label{app:real_furtherexp}
In Section~\ref{sec:exp_real}, we estimated the ITE on the OHIE data and visualized the treatment heterogeneity with respect to \emph{age} and \emph{gender}. In this section, we provide additional results and also visualize the heterogeneity with respect to age as well as \emph{additional covariates}. These are: (1)~the number of emergency visits a patient has in its history before signing up for the lottery and (2)~the language spoken by the patient (English or other). We fixed the gender to ``female''.

For (1), we plot the estimated ITE for three different age groups over the number of emergency visits. The results are shown in Fig.~\ref{fig:real_visits}.
\begin{figure}[h]
\centering
\includegraphics[width=1\linewidth]{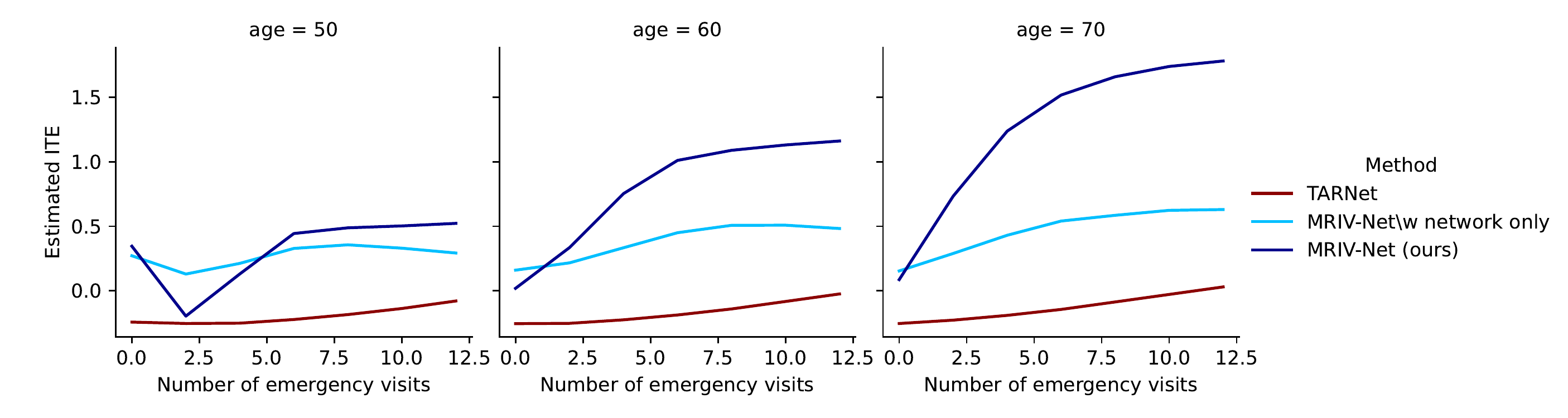}
\caption{Estimated treatment effects for different age and number of emergency visits.}
\label{fig:real_visits}
\end{figure}

We observe that all methods tend to estimate a larger effect for individuals who had more emergency visits in their patient history. However, the IV methods (in particular our \modelname) estimate a much larger effect for patients with many visits. In contrast to the other methods, \modelname also estimates larger effects for older than for younger patients. The results provided by \modelname seem intuitive, as older patients with a history of emergency visits should be exposed to higher health-related risks, thus benefiting from health insurance. The fact that TARNet consistently estimates small (and even negative) effects could be an indicator of bias due to unobserved confounding.

For (2), we plot the estimated ITE for three different age groups over the spoken language. The results are shown in Fig.~\ref{fig:real_language}.

\begin{figure}[h]
\centering
\includegraphics[width=1\linewidth]{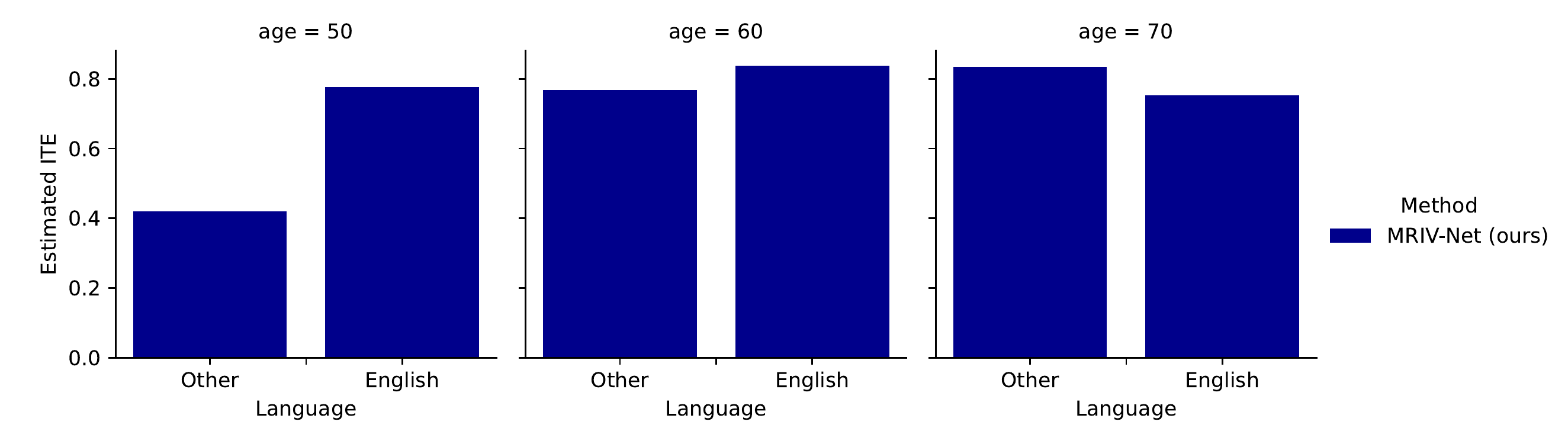}
\caption{Estimated treatment effects for different age and language}
\label{fig:real_language}
\end{figure}
For patients of age 50, our \modelname estimates a higher effect for the English-speaking patients. Interestingly, for older patients, the estimated effect increases also for non-English speaking patients.

\end{document}